\title{$\text{FO} = \text{FO}^3$ for linear orders with monotone binary relations} 
\author{Marie Fortin}{LSV, CNRS \& ENS Paris-Saclay, Universit{\'e} Paris-Saclay}{fortin@lsv.fr}{}{}
\authorrunning{M. Fortin}
\keywords{first-order logic, three-variable property, propositional dynamic logic}
\newcommand\Nat{\mathbb{N}}
\newcommand{\df}{:=}
\newcommand{\Seta}{{A}}
\newcommand{\Setb}{{B}}
\newcommand{\Setc}{{C}}
\newcommand{\ela}{a}
\newcommand{\elb}{b}
\newcommand{\elc}{c}
\newcommand{\R}{\mathrel{\mathcal{R}}}
\newcommand{\comp}{\cdot}
\newcommand{\intpr}{interval-preserving\xspace}
\newcommand{\Intpr}{Interval-preserving\xspace}
\newcommand{\AP}{\mathcal{P}}
\newcommand{\propa}{P}
\newcommand{\propb}{Q}
\newcommand{\Rel}{\Gamma}
\newcommand{\rela}{\alpha}
\newcommand{\relb}{\beta}
\newcommand{\M}{\mathcal{M}}
\newcommand{\D}{A}
\newcommand{\leM}{\le}
\newcommand{\intM}[2]{{#2}^{#1}}
\newcommand{\FO}{\textup{FO}[\Gamma,\le]}
\newcommand{\FOD}{\textup{FO}[\Delta,\le]}
\newcommand{\FODt}{\textup{FO}^{3}[\Delta,\le]}
\newcommand{\FOk}[1]{\textup{FO}^{#1}[\Gamma,\le]}
\newcommand{\FOt}{\textup{FO}^3[\Gamma,\le]}
\newcommand{\Free}{\mathsf{Free}}
\newcommand{\Var}{\mathcal{X}}
\newcommand{\sfPDL}{\textup{PDL}_{\mathsf{sf}}[\Rel,\le]}
\newcommand{\sfPDLm}{\textup{PDL}_{\mathsf{sf}}[\Rel,\le,\Loopname,\mathsf{c1},\mathsf{c2},\mathsf{c3},\mathsf{c4}]}
\newcommand{\sfPDLmi}{\textup{PDL}_{\mathsf{sf}}[\Rel,\le,{\cap},\mathsf{c1},\mathsf{c2},\mathsf{c3},\mathsf{c4}]}
\newcommand{\existsp}[2]{\mathop{\langle #1 \rangle} {#2}}
\newcommand{\existsptrue}[1]{\mathop{\langle #1 \rangle}}
\newcommand{\True}{\mathit{true}}
\newcommand{\False}{\mathit{false}}
\newcommand{\compl}{\mathsf{c}}
\newcommand{\pic}[1]{{#1}^{\compl}}
\newcommand{\Loopname}{\mathsf{loop}}
\newcommand{\Loop}[1]{\Loopname(#1)}
\newcommand{\test}[1]{\{#1\}?}
\newcommand{\sem}[1]{\llbracket {#1} \rrbracket}
\newcommand{\sems}[1]{\left\llbracket {#1} \right\rrbracket}
\newcommand{\semM}[2]{{\llbracket {#2} \rrbracket}^{#1}}
\newcommand{\rightp}{<}
\newcommand{\righta}{\le}
\newcommand{\leftp}{>}
\newcommand{\lefta}{\ge}
\newcommand{\leftpi}[1]{{\mathsf{left}~} {#1}}
\newcommand{\rightpi}[1]{{\mathsf{right}~} {#1}}
\newcommand{\cll}[1]{{#1}^{\mathsf{c1}}}
\newcommand{\clr}[1]{{#1}^{\mathsf{c2}}}
\newcommand{\crl}[1]{{#1}^{\mathsf{c3}}}
\newcommand{\crr}[1]{{#1}^{\mathsf{c4}}}
\newcommand{\Until}{\mathbin{\mathsf{U}}}
\newcommand{\leftle}{\sqsubseteq_{\textup{left}}}
\newcommand{\rightle}{\sqsubseteq_{\textup{right}}}
\tikzstyle{dot}=[circle,fill,inner sep=0pt,minimum size=2pt]
\newcommand{\fok}[1]{\textup{FO}^{#1}}
\newcommand{\fo}{\textup{FO}}
\newcommand{\EF}{Ehrenfeucht-Fra\"iss\'e\xspace}
\newcommand{\C}{\mathcal{C}}
\newcommand{\Mg}{\M_{\Gamma}}
\newcommand{\Md}{\M_{\Delta}}
\newcommand{\equivp}{\equiv_{\Mg}}
\newcommand{\trad}[1]{\widetilde {#1}}
\definecolor{mygreen}{RGB}{0,200,0}
\begin{document}

\maketitle

\begin{abstract}
  We show that over the class of linear orders with additional binary
  relations satisfying some monotonicity conditions, monadic first-order logic
  has the three-variable property.
  This generalizes (and gives a new proof of) several known results, including
  the fact that monadic first-order logic has the three-variable property over linear
  orders, as well as over $(\mathbb{R},<,+1)$, and answers some open questions
  mentioned in a paper from Antonopoulos, Hunter, Raza and Worrell [FoSSaCS 2015].
  Our proof is based on a translation of monadic first-order logic formulas into
  formulas of a star-free variant of Propositional
  Dynamic Logic, which are in turn easily expressible in monadic first-order logic
  with three variables.
\end{abstract}

\section{Introduction}

Logics with a bounded number of variables have been extensively studied,
in particular in the context of descriptive complexity
\cite{Immerman82,Immerman91,Grohe98,KLPT06}
and temporal logics \cite{Kamp68,Gabbay1981,Hodkinson94,HunterOW13}.
One recurring question of interest
\cite{Gabbay1981,Poizat1982,ImmermanK89,Dawar05,Rossman08,AHRW15}
is to determine, in a given class $\C$ of structures, whether all properties
expressible in monadic first-order logic ($\fo$) can be defined in the fragment
$\fok k$ consisting of formulas which use at most $k$ variables. (A same
variable may be quantified over several times in a formula.)
In fact, several non-equivalent versions of this question appear in the
literature, many of which are compared in~\cite{HodkinsonS97}.
We say that $\C$ has the \emph{$k$-variable property} if every
formula of $\fo$ with at most $k$ free variables is equivalent
over $\C$ to a formula of $\fok k$.
Note that this is strictly stronger than requiring that all \emph{sentences}
(without free variables) of $\fo$ are equivalent to some $\fok k$ formulas.
Indeed, Hodkinson and Simon gave an example of a class of structures where
no sentence requires more than 3 variables, but which does not have the
$k$-variable property for any $k$ \cite{HodkinsonS97}.

The problem of whether a given class of structures has the $k$-variable property
is closely related to the question of the existence of an expressively
complete temporal logic (with a finite set of $\fo$-definable modalities).
A temporal logic is called expressively complete if any first-order formula
with a single free variable can be expressed in it.
For instance, it is well-known that linear temporal logic (LTL) over
Dedekind-complete time flows, or its extension with Stavi
connectives over all time flows, are expressively complete for
first-order logic \cite{Kamp68,GHR1993}.
More recently, it was shown that over the real numbers equipped with
binary relations $+q$ for all $q \in \mathbb{Q}$, metric temporal logic (MTL)
is expressively complete \cite{HunterOW13}.
However, the questions of having the $k$-variable property for some $k$
or admitting an expressively complete temporal logic are incomparable in
general: there exist a class of structures which admits a finite
expressively complete set of temporal connectives but which does not have
the $k$-variable property for any $k$ \cite{HodkinsonS97}, and one which
has the $3$-variable property but for which no temporal logic is
expressively complete \cite{Hodkinson94}.
However, Gabbay established that having the $k$-variable property
implies the existence of a \emph{multi-dimensional} expressively
complete temporal logic, with multiple reference points \cite{Gabbay1981}.

Another classical approach to proving or disproving that a class of structures
has the $k$-variable property is through \EF games, with a bounded number
of pebbles \cite{henkin1967,Poizat1982,ImmermanK89,AHRW15}.
This was applied by Immerman and Kozen to linear orders and bounded-degree
trees \cite{ImmermanK89}, and by Antonopoulos et al.\ to real-time
signals \cite{AHRW15}.

Natural candidates for classes $\C$ which might have the $k$-variable
property are classes of linearly ordered structures.
Indeed, a typical counter-example to unrestricted structures
having the $k$-variable property is a formula such as
``there exists $k+1$ distinct elements which satisfy some predicate $P$''.
It is in general not expressible in $\fok k$, but it is easily expressible
in $\fok 2$ if all models are equipped with a linear order $<$. For instance
for $k = 2$, we take the formula
$\exists x.\ P(x) \land
\exists y. (x < y \land P(y) \land \exists x. (y < x \land P(x)))$.
As mentioned before, Immerman and Kozen showed that the class of linear
orders has the $3$-variable property \cite{ImmermanK89}.
However, adding a single binary relation suffices to obtain a class of
linearly ordered structures which does not have the $k$-variable property
for any~$k$. Venema gave an example of a dense linear order with a single
equivalence relation which does not have the $k$-variable property for
any~$k$ \cite{Venema90}; this was adapted in \cite{AHRW15} to give
another example where the equivalence relation is replaced with a bijection.
In fact, even for finite linear orders, Rossman \cite{Rossman08}
proved that the class of linearly ordered graphs does not have the
$k$-variable property for any~$k$, resolving a problem which had been open
for more than 25 years \cite{Immerman82}.
Therefore, adding binary relations to linear orders while keeping the
$k$-variable property requires some restrictions on the interpretation
of the relation symbols.

On the positive side, Antonopoulos et al.\ proved that the class of structures
over $(\mathbb{R},<,+1)$ (or signals) has the $3$-variable property
\cite{AHRW15}.
Such structures have been studied in the context of real-time verification.
As a corollary, they also showed that $(\mathbb{R},<,f)$ has the
$3$-variable property  for any linear function $f : x \mapsto ax + b$.

\subparagraph{Contribution.}
We consider the class of linearly ordered structures with an additional
(finite or infinite) number of binary \emph{\intpr} relations.
These are binary relations $\R$ such that, for all intervals $I$, any point
which is in between two points of ${\R}(I)$ and has a preimage by $\R$
must have one in $I$. (We also require a symmetric condition of the converse
relation~${\R}^{-1}$.)
We show that $\fo$ over this class of structures also has the $3$-variable
property.

This generalizes results from \cite{ImmermanK89} and \cite{AHRW15}
described above.
Moreover, this answers some open questions mentioned in the conclusion
of \cite{AHRW15}, which asked if the result could be extended from linear
functions to polynomials over the reals, or other linear orders and families
of monotone functions.
In fact, all increasing or decreasing partial functions (over arbitrary
linear orders) are special cases of \intpr relations, and thus covered by
our result.

Our proof relies on different techniques than \cite{ImmermanK89,AHRW15},
which were based on \EF games.
We give an effective translation from $\fo$ to $\fok 3$ which goes through
a star-free variant of Propositional Dynamic Logic (PDL) with converse.
Propositional dynamic logic was introduced by Fischer and Ladner \cite{FisL79}
to reason about program schemes, and has now found a large range of
applications in artificial intelligence and verification
\cite{HalpernM92,GiacomoL94,LangeLutzJSL05,Lange06,Goeller2009}.
It combines local formulas containing modal operators, and path formulas
using the concatenation, union and Kleene star operations.
Several extensions have been studied, including PDL with converse
\cite{Streett81}, intersection \cite{Danecki85}, or negation of atomic
programs \cite{LutzW05}.
The particular \emph{star-free} variant of PDL we use here is in fact very
similar to Tarski's relation algebras \cite{TarskiGivant87}, which was used as
a basis for formalizing set theory.
It also corresponds to a \emph{two-dimensional} temporal logic in the
sense of Gabbay~\cite{Gabbay1981}.

We applied similar proof techniques in \cite{BolligFG18}, where we introduced
a star-free variant of PDL and proved that it is equivalent to $\fo$ over
\emph{message sequence charts} (MSCs) (and thus obtained a 3-variable property
result for MSCs as a corollary).
MSCs are discrete partial orders which represent behaviors of concurrent
message passing systems. They consist of a fixed, finite number of linear
orders called process orders (one for each process in the system), together with
FIFO binary message relations connecting matching send and receive actions.
Having a (fixed) finite number of total orders instead of a single one is not an
important difference, as we could always put them one after the other to
extend them into a single linear order.
FIFO relations are a special case of \intpr relations, thus the result
of the present paper can in fact be seen as a strict generalization of
our previous result in \cite{BolligFG18}.
More importantly, a major difference between MSCs studied in \cite{BolligFG18}
and the setting we consider here is that MSCs are \emph{discrete} structures,
whereas here we allow arbitrary linear orders. In fact, \cite{BolligFG18} relied
on the definition of formulas describing the minimum or the maximum of
some binary relations.
As such, it is interesting to see that the same kind of techniques can still
be applied to a priori very different linear orders.

\subparagraph{Outline.}
In Section \ref{sec:relations}, we introduce \intpr relations and monadic first-order
logic.
In Section \ref{sec:pdl}, we define star-free PDL, and prove some properties
of its formulas.
In Section \ref{sec:translation}, we give an effective translation from FO to
star-free PDL, and explain its consequences.
We conclude in Section \ref{sec:conclusion}.

\section{\Intpr relations and first-order logic}\label{sec:relations}

In this section, we define the class of structures covered by our results,
and recall the syntax of first-order logic.

\subparagraph{\Intpr binary relations.}
Let ${\R} \subseteq \Seta \times \Setb$ be a binary relation between sets
$\Seta$ and $\Setb$.
We write $\ela \R \elb$ if $(\ela,\elb) \in {\R}$, and
${\R}(\ela) = \{ \elb \in \Setb \mid \ela \R \elb\}$.
For a subset $\Seta' \subseteq \Seta$, we also write
${\R}(\Seta') = \bigcup_{\ela \in \Seta'} {\R}(\ela)$.
We define the \emph{converse} of a relation $\R$ as
${\R}^{-1} = \{(\elb,\ela) \in \Setb \times \Seta \mid (\ela,\elb) \in {\R}\}$,
and the \emph{composition} of two binary relations
${\R}_1 \subseteq \Seta \times \Setb$ and ${\R}_2 \subseteq \Setb \times \Setc$
as ${\R_1} \comp {\R_2} = \{(\ela,\elc) \in \Seta \times \Setc \mid \exists
\elb \in \Setb.\ (\ela,\elb) \in {\R_1} \land (\elb,\elc) \in {\R_2}\}$.
Finally, we write
${\R}^{\compl} = (\Seta \times \Setb) \setminus {\R}$ for the \emph{complement}
of $\R$.
Note that we have the following identities:
\[
  \left({\R}_1 \comp {\R_2}\right)^{-1} = {\R_2}^{-1} \comp {\R_1}^{-1}
  \qquad
  \left(\pic {\R}\right)^{-1} = \left( {\R}^{-1} \right)^{\compl}
  \qquad
  \left({\R}_1 \cap {\R_2}\right)^{-1} = {\R}_1^{-1} \cap {\R}_2^{-1} \, .
\]

A linear order ${\le}$ over a set $\Seta$ is a reflexive, transitive and
antisymmetric relation ${\le} \subseteq \Seta \times \Seta$ such that
for all $\ela,\elb \in \Seta$, we have $\ela \le \elb$ or $\elb \le \ela$.
Let $(\Seta,\le)$ be a linearly ordered set.
For $\Seta' \subseteq \Seta$, we will also denote by $\le$ the restriction of
$\le$ to $\Seta'$, so that $(\Seta',\le)$ is still a linearly ordered set.
Moreover, for $\ela \in \Seta$, we write $\ela < \Seta'$ if
for all $\ela' \in \Seta'$, $\ela < \ela'$,
and $\Seta' < \ela$ if for all $\ela' \in \Seta'$, $\ela' < \ela$.

An \emph{interval} of $(\Seta,\le)$ is a set $I \subseteq \Seta$ such that
for all $\ela \le \elb \le \elc$ with $\ela,\elc \in I$, we have $\elb \in I$.
For $\ela,\elb \in \Seta$, we denote by $[a,b)$ the interval
$\{c \in \Seta \mid a \le c < b\}$, and similarly for the intervals
$[a,b]$, $(a,b]$, $(a,b)$.
We call a relation ${\R} \subseteq \Seta \times \Setb$ between two linearly
ordered sets $(\Seta,\le_\Seta)$ and $(\Setb,\le_{\Setb})$ \emph{\intpr} if:
\begin{itemize}
\item For all intervals $I$ of $(\Seta,\le_\Seta)$, ${\R}(I)$ is an interval of
  $({\R}(\Seta),\le_\Setb)$.
\item For all intervals $J$ of $(\Setb,\le_\Setb)$, ${\R}^{-1}(J)$ is an
  interval of $({\R}^{-1}(\Setb),\le_\Seta)$.
\end{itemize}
In other terms, for all $\ela_1 \R \elb_1$ and $\ela_2 \R \elb_2$ with
$\ela_1,\ela_2 \in I$, for all $\elb_1 \le_\Setb \elb \le_\Setb \elb_2$,
if there exists some $\ela \in \Seta$ such that $\ela \R \elb$,
then there exists one in $I$
(cf.\ Figure~\ref{fig:def-intpr}).
Note that we do not require that \emph{all} elements between $\elb_1$ and
$\elb_2$ are in ${\R}(I)$, but only those which are in the image of ${\R}$.
The second condition is symmetric: for all $\ela_1 \R \elb_1$ and
$\ela_2 \R \elb_2$ with $\elb_1,\elb_2 \in J$, for all
$\ela_1 \le_\Seta \ela \le_\Seta \ela_2$,
if there exists some $\elb \in \Setb$ such that $\ela \R \elb$,
then there exists one in $J$.

\begin{figure}
  \begin{tikzpicture}[>=stealth,auto,yscale=1.4,inner sep=2pt]    
    \node (a1) at (0,1) {$a_1\vphantom{a'_2}$};
    \node (a2) at (2,1) {$a_2\vphantom{a'_2}$};
    \node (b1) at (0,0) {$b_1$};
    \node (b2) at (2,0) {$b_2$};
    \node (a) at (-1,1) {$a\vphantom{a'_2}$};
    \node (b) at (1,0) {$b$};
    \node (a') at (1,1) {$a'\vphantom{a'_2}$};

    \node at ($(b1)!0.5!(b)$) {$\le$};
    \node at ($(b)!0.5!(b2)$) {$\le$};

    \draw[->] (a1) -- (b1);
    \draw[->] (a2) -- (b2);
    \draw[->] (a) -- (b.north west);
    \path[->,dashed] (a') edge node {$\exists$} (b);

    \begin{pgfonlayer}{background}
      \fill[teal!20] (a1.north west) rectangle (a2.south east) ;
      \node[teal] at (2.5,1.1) {$I$};
    \end{pgfonlayer}
  \end{tikzpicture}
  \hspace{5em}
  \begin{tikzpicture}[>=stealth,auto,yscale=1.4,inner sep=2pt]    
    \node (a1) at (0,1) {$a_1$};
    \node (a2) at (2,1) {$a_2$};
    \node (b1) at (0,0) {$b_1$};
    \node (b2) at (2,0) {$b_2$};
    \node (b) at (-1,0) {$b$};
    \node (a) at (1,1) {$a$};
    \node (b') at (1,0) {$b'$};

    \node at ($(a1)!0.5!(a)$) {$\le$};
    \node at ($(a)!0.5!(a2)$) {$\le$};
    
    \draw[->] (a1) -- (b1);
    \draw[->] (a2) -- (b2);
    \draw[->] (a) -- (b.north east);
    \path[->,dashed] (a) edge node {$\exists$} (b');
    \begin{pgfonlayer}{background}
      \fill[violet!20] (b1.north west) rectangle (b2.south east) ;
      \node[violet] at (2.5,-0.1) {$J$};
    \end{pgfonlayer}
  \end{tikzpicture}
  \caption{Definition of \intpr relations.\label{fig:def-intpr}}
\end{figure}

\begin{example}\label{ex1}
  For any linear order $(\Seta,\le)$ and partial function
  $f : \Seta \to \Seta$, if $f$ is increasing or decreasing then
  the relation $\{(\ela,f(\ela)) \mid \ela \in \text{dom}(f)\}$
  is \intpr.

  As another example, consider a temporal structure $(\Seta,\le,\lambda)$
  over a set of atomic propositions $\text{AP}$, where
  $\lambda : \Seta \to 2^{\text{AP}}$ indicates the set of propositions
  which are true at a given point.
  For $\propa,\propb \in \text{AP}$, we let
  $\mathsf{until}_{\propa,\propb} =
  \{ (\ela,\elb) \in \Seta \times \Seta \mid
  \ela < \elb \land \propb \in \lambda(\elb) \land \forall \ela < \elc < \elb,
  \propa \in \lambda(\elc)\}$.
  Then $\mathsf{until}_{\propa,\propb}$ is \intpr.
\end{example}

The following lemma states some simple closure properties of \intpr relations.

\begin{lemma}\label{lem:closure}
  Let $(\Seta,\le_\Seta)$, $(\Setb,\le_\Setb)$, $(\Setc,\le_\Setc)$ be
  linearly ordered sets.
  \begin{enumerate}
  \item\label{closure-converse}
    For all \intpr relation ${\R} \subseteq \Seta \times \Setb$,
    ${\R}^{-1}$ is \intpr.
  \item\label{closure-inter}
    For all \intpr relations ${\R}_1, {\R}_2 \subseteq \Seta \times \Setb$,
    ${\R}_1 \cap {\R}_2$ is \intpr.
  \item\label{closure-comp}
    For all \intpr relations ${\R}_1 \subseteq \Seta \times \Setb$ and 
    ${\R}_2 \subseteq \Setb \times \Setc$,
    ${\R}_1 \comp {\R}_2$ is \intpr.
  \end{enumerate}
\end{lemma}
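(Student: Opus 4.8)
The plan is as follows. Part~\ref{closure-converse} is immediate from the definition, which is symmetric in $\R$ and $\R^{-1}$: since $(\R^{-1})^{-1} = \R$, the two displayed conditions for $\R^{-1}$ are exactly the two conditions for $\R$ with the roles of $\Seta$ and $\Setb$ exchanged. For parts~\ref{closure-inter} and~\ref{closure-comp} I first reduce to checking only the \emph{first} condition (images of intervals of the source). Using part~\ref{closure-converse} together with the identities $({\R}_1 \cap {\R}_2)^{-1} = {\R}_1^{-1}\cap{\R}_2^{-1}$ and $({\R}_1\comp{\R}_2)^{-1} = {\R}_2^{-1}\comp{\R}_1^{-1}$ from the preamble, the second condition for ${\R}_1 \cap {\R}_2$ (resp.\ ${\R}_1\comp{\R}_2$) is the first condition for ${\R}_1^{-1}\cap{\R}_2^{-1}$ (resp.\ ${\R}_2^{-1}\comp{\R}_1^{-1}$), which is again an intersection (resp.\ composition) of \intpr relations. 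So it suffices to prove that the first condition holds for any intersection, and for any composition, of two \intpr relations.

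For the intersection, fix an interval $I$ of $\Seta$ and use the reformulation of the definition: take $a_1,a_2 \in I$ with $a_i\,({\R}_1\cap{\R}_2)\,b_i$, a point $b$ with $b_1\le_\Setb b\le_\Setb b_2$, and some $a$ with $a\,({\R}_1\cap{\R}_2)\,b$; the goal is to produce such a point inside $I$. If $a\in I$ we are done, so assume $a\notin I$. Since $I$ is an interval containing $a_1,a_2$, the point $a$ is then $\le_\Seta$-below every element of $I$ or $\le_\Seta$-above every element of $I$; say below (the other case is symmetric). Let $I'\subseteq I$ be the subinterval with endpoints $a_1$ and $a_2$ and let $a_0$ be the $\le_\Seta$-smaller of $a_1,a_2$. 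Applying the first condition of ${\R}_1$ to $I'$ (using $b_1,b_2\in{\R}_1(I')$, $b\in{\R}_1(\Seta)$, and $b_1\le_\Setb b\le_\Setb b_2$) yields some $c\ge_\Seta a_0$ with $c\,{\R}_1\,b$; likewise some $c'\ge_\Seta a_0$ with $c'\,{\R}_2\,b$.

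The crux — and the step where the naive argument breaks — is that $c$ and $c'$ need not coincide, and neither is guaranteed to be related to $b$ by \emph{both} relations. The remedy is to discard $c,c'$ and show that $a_0$ itself works. Indeed $a <_\Seta a_0 \le_\Seta c$ with $a,c\in{\R}_1^{-1}(b)$, and $a_0\in{\R}_1^{-1}(\Setb)$ since $a_0\in\{a_1,a_2\}$ lies in the domain of ${\R}_1$; so the second condition of ${\R}_1$ applied to the singleton interval $\{b\}$ — which says ${\R}_1^{-1}(b)$ is an interval of $({\R}_1^{-1}(\Setb),\le_\Seta)$ — forces $a_0\,{\R}_1\,b$. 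Symmetrically $a_0\,{\R}_2\,b$, hence $a_0\,({\R}_1\cap{\R}_2)\,b$ with $a_0\in I$, as required. When $a$ lies above $I$ one runs the same argument with the $\le_\Seta$-larger of $a_1,a_2$.

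The composition is the same argument carried out one level down. Writing $\R={\R}_1\comp{\R}_2$ and $\R(I)={\R}_2({\R}_1(I))$, the set ${\R}_1(I)$ is an interval of ${\R}_1(\Seta)$ by the first condition of ${\R}_1$. Given $c_1,c_2\in\R(I)$ and $c$ between them in $\R(\Seta)$, choose $b_1,b_2\in{\R}_1(I)$ and $b'\in{\R}_1(\Seta)$ with $b_i\,{\R}_2\,c_i$ and $b'\,{\R}_2\,c$, and replay the previous argument inside $\Setb$, with ${\R}_1(I)$ in place of $I$ and ${\R}_2$ in place of both relations: the $\le_\Setb$-smaller (or larger) of $b_1,b_2$ — which lies in ${\R}_1(I)\cap{\R}_2^{-1}(\Setc)$ — is pulled into ${\R}_2^{-1}(c)$ by the second condition of ${\R}_2$, landing inside ${\R}_1(I)$ and witnessing $c\in{\R}_2({\R}_1(I))$. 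I expect the convexity step — invoking the second (preimage) condition to merge the two separately obtained witnesses into a single point of the common domain — to be the only real subtlety; everything else is routine bookkeeping with the two defining conditions and the converse identities.
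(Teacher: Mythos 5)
Your proof is correct. Parts~\ref{closure-converse} and~\ref{closure-inter} match the paper's own proof essentially step for step: the same reduction to the first condition via the converse identities, and for the intersection the same two-step argument --- the first condition of ${\R}_1$ applied to the interval spanned by $a_1,a_2$ produces a witness $c$, and the second condition applied to the singleton interval $\{b\}$ (convexity of ${\R}_1^{-1}(b)$ inside ${\R}_1^{-1}(\Setb)$) pulls the endpoint $a_0$ into ${\R}_1^{-1}(b)$; the paper merely normalizes $a_1 \le a_2$ so that your $a_0$ is its $a_1$. Where you genuinely diverge is part~\ref{closure-comp}. The paper does not replay the argument: it reduces composition to intersection by introducing the auxiliary \intpr relation ${\R}_3 = {\R}_1(\Seta) \times \Setc$ and the convex hull $J$ of ${\R}_1(I)$ in $\Setb$, verifying the identity $({\R}_1 \comp {\R}_2)(I) = ({\R}_2 \cap {\R}_3)(J)$, and then invoking part~\ref{closure-inter} as a black box. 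Your version instead reruns the witness-then-pull-back mechanism one level down in $\Setb$, and it does go through. The trade-off: the paper's reduction is shorter and avoids repeating the case analysis, while yours is self-contained and makes it transparent that a single mechanism underlies both closure properties. One point you should make explicit if you write part~\ref{closure-comp} out in full: ${\R}_1(I)$ is an interval of $({\R}_1(\Seta),\le_\Setb)$ but in general \emph{not} of $(\Setb,\le_\Setb)$, so the ``replay'' must apply the first condition of ${\R}_2$ to the genuine interval of $\Setb$ spanned by $b_1,b_2$, and must use $b' \in {\R}_1(\Seta)$ together with the interval property of ${\R}_1(I)$ in ${\R}_1(\Seta)$ to justify the case split placing $b'$ below or above all of ${\R}_1(I)$. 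Your sketch is consistent with this, but the phrase ``with ${\R}_1(I)$ in place of $I$'' glosses over it.
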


\begin{proof}
  Part \ref{closure-converse} follows from the fact that
  ${({\R}^{-1})}^{-1} = {\R}$.

  Let us prove \ref{closure-inter}.
  Since $({\R}_1 \cap {\R_2})^{-1} = {\R}_1^{-1} \cap {\R}_2^{-1}$,
  by symmetry, it suffices to prove that for all interval $I$ of $(\Seta,\le)$,
  $({\R}_1 \cap {\R}_2)(I)$ is an interval of
  $(({\R}_1 \cap {\R}_2)(\Seta),\le)$.
  Let $\ela_1,\ela_2 \in I$ and $\elb_1 \le \elb \le \elb_2$ such that
  $(\ela_1,\elb_1),(\ela_2,\elb_2) \in ({\R}_1 \cap {\R}_2)$ and
  $(\ela,\elb) \in ({\R}_1 \cap {\R}_2)$ for some $a \in \Seta$.
  If $\ela \in I$, then we are done.
  Otherwise, suppose for instance that $\ela < \ela_1 \le \ela_2$
  (the other cases are similar).
  Since $\R_1$ is \intpr, there exists $\ela_1 \le \ela' \le \ela_2$
  such that $\ela' \R_1 \elb$.
  Then, since $\ela < \ela_1 \le \ela'$ and ${\R}_1^{-1}(\elb)$ is an interval
  of $({\R}_1^{-1}(\Setb),\le_\Seta)$, we obtain $\ela_1 \R_1 \elb$.
  Similarly, $\ela_1 \R_2 \elb$.
  Hence $\ela_1 \mathrel{(\R_1 \cap \R_2)} \elb$.
  
  Let us show that \ref{closure-inter} implies \ref{closure-comp}.
  Again, by symmetry, it suffices to prove that for all interval $I$ of
  $(\Seta,\le_\Seta)$, $({\R_1} \cdot {\R_2})(I)$ is an interval of
  $(({\R_1} \cdot {\R_2})(A),\le_\Setc)$.
  Let ${\R_3} \subseteq \Setb \times \Setc$ denote the relation
  ${\R_1}(\Seta) \times \Setc$. It is an \intpr relation between
  $(\Setb,\le_\Setb)$ and $(\Setc,\le_\Setc)$. Moreover,
  we have $({\R_1} \cdot {\R_2})(A) = ({\R_2} \cap {\R_3})(B)$.
  Now, let $I$ be some interval of $(\Seta,\le_\Seta)$, and
  $J = \{\elb \in \Setb \mid \exists \elb_1,\elb_2 \in {\R_1}(I),
  \elb_1 \le \elb \le \elb_2\}$.
  Then $J$ is an interval of $(\Setb,\le_\Setb)$.
  Moreover, since $\R_1$ is \intpr, we have
  ${\R_1}(I) = J \cap {\R_1}(A)$, hence
  \[
    ({\R_1} \cdot {\R_2})(I) = {\R_2}({\R_1}(I)) = {\R_2}(J \cap {\R_1}(A))
    = ({\R}_2 \cap {\R}_3)(J) \, .
  \]
  Then, according to \ref{closure-inter},
  $({\R_1} \cdot {\R_2})(I)$ is an interval of
  $(({\R_2} \cap {\R_3})(B),\le_\Setc)$, i.e.,
  an interval of $(({\R_1} \cdot {\R_2})(A),\le_\Setc)$.
\end{proof}

\subparagraph{Models.}
Let $\AP = \{\propa,\propb,\ldots\}$ be an infinite set of monadic predicates,
and $\Rel = \{\rela,\relb,\ldots\}$ be a
finite or infinite set of binary relation symbols.
Throughout the paper, $\M$ will denote a structure $\M = (\D,\leM,
(\intM \M \rela)_{\rela \in \Rel},(\intM \M \propa)_{\propa \in \AP})$ where
$\leM$ is a linear order over $\D$,
$\intM \M \rela \subseteq \D \times \D$ is an \intpr relation for all
$\rela \in \Rel$,
and $\intM \M \propa \subseteq \D$ for all $\propa \in \AP$.

\subparagraph{Monadic first-order logic.}
We assume an infinite supply of variables $\Var = \{x,y,\ldots\}$.
The set $\FO$ of monadic first-order logic formulas over $\Rel$ is defined
as follows:
\[
  \Phi ::= \propa(x) \mid x \le y \mid x = y \mid \rela(x,y)
  \mid \Phi \lor \Phi \mid \lnot \Phi \mid \exists x. \Phi \, ,
  \qquad \text{where } x,y \in \Var, \propa \in \AP, {\rela} \in \Rel \, .
\]
We assume that all formulas are interpreted over structures $\M$ defined
as above.
Given an $\FO$ formula~$\Phi$, we denote by $\Free(\Phi)$ its set of
free variables.
We define the satisfaction relation $\M,\nu \models \Phi$ as usual, where
$\M = (\D,\leM,(\intM \M \rela)_{\rela \in \Rel},(\intM \M \propa)_{\propa \in \AP})$
and $\nu : \Free(\Phi) \to \D$ is an interpretation of the free
variables of $\Phi$.
We say that two formulas $\Phi,\Psi \in \FO$ are \emph{equivalent},
written $\Phi \equiv \Psi$, if for all $\M =
(\D,\leM,(\intM \M \rela)_{\rela \in \Rel},(\intM \M \propa)_{\propa \in \AP})$
and $\nu : \Free(\Phi) \cup \Free(\Psi) \to \D$, we have
$\M,\nu|_{\Free(\Phi)} \models \Phi$ if and only if
$\M,\nu|_{\Free(\Psi)} \models \Psi$.

For $k \in \Nat$, we denote by $\FOk k$ the set of first-order formulas with
at most $k$ variables. Note that a same variable may be quantified over
several times in the formula.

\begin{example}\label{ex:polynomials}
  Let $p : \mathbb{R} \to \mathbb{R}$ be a polynomial function, and
  $m_1<\cdots <m_n$ its local extrema (we suppose that $n \ge 1$).
  Fix $\Gamma = \{p\}$. For convenience, we will write $p(x) = y$
  instead of $p(x,y)$ in $\FO$ formulas.
  We focus on models of the form
  $\M = {(\mathbb{R},\le,\intM \M p, (\intM \M \propa)_{\propa \in \AP})}$
  where $\le$ is the usual ordering of the reals, and
  $\intM \M p = \{(x,p(x)) \mid x \in \mathbb{R}\}$.
  Let us describe an $\FOk 3$ formula $m_{i} \le x$  
  such that for all $\M$ and $r \in \mathbb{R}$, we have
  $\M,[x \mapsto r] \models m_i \le x$ if and only if $m_i \le r$.
  First, we write $p(x) \le p(y)$ for the $\FOk 3$ formula
  \[
    \exists z.\, p(x) = z \land \exists x.\, (p(y) = x \land z \le x) \, .
  \]
  We can then define formulas $\min (x) \in \FOk 3$ and
  $\max (x) \in \FOk 3$ which state that $x$ is a local minimum (resp.\ 
  maximum) of $p$, for instance:
  \begin{align*}
    \min (x) = {}
    & \left(\exists z.\, z < x \land \forall y.\,
      (z < y \le x \implies p(x) \le p(y))\right) \land {} \\
    & \left(\exists z.\, x < z \land \forall y.\,
      (x \le y < z \implies p(x) \le p(y))\right) \, .
  \end{align*}
  The formula $m_i \le x$ then states that there exists at least $i$ local
  extrema before $x$, alternating existential quantifications over $y$ and $z$
  to identify them; for instance, $m_3 \le x$ is the formula
  \[
    \exists y.\, y \le x \land (\min (y) \lor \max(y)) \land
    \exists z.\, z < y \land (\min (z) \lor \max(z)) \land
    \exists y.\ y < z \land (\min (y) \lor \max(y)) \, .
  \]
\end{example}

\section{Star-free Propositional Dynamic Logic}\label{sec:pdl}

\subparagraph{Star-free Propositional Dynamic Logic.}
Propositional dynamic logic (PDL) \cite{FisL79} consists of two sorts of formulas: state
formulas which are evaluated at single elements, and path formulas which are
evaluated at pairs of elements and allow to navigate inside the model.
Here we consider a star-free variant of PDL (with converse).
The syntax of \emph{star-free propositional dynamic logic} over
$\Rel$, written $\sfPDL$, is given below:
\begin{align*}
  \varphi & ::= \propa \mid \varphi \lor \varphi \mid \lnot \varphi \mid
            \existsp{\pi}{\varphi}
  && \text{(state formulas)} \\
  \pi & ::= {\rela} \mid {\righta} \mid \test{\varphi}
        \mid \pi^{-1}
        \mid \pi \cdot \pi
        \mid \pi \cup \pi \mid \pi \cap \pi \mid \pic \pi
  && \text{(path formulas)}
\end{align*}
where $\propa \in \AP$ and $\rela \in \Rel$.

Compared to classical PDL, star-free PDL uses the operators
$(\cdot,\cup,\cap,\compl)$ of star-free expressions, instead of the
rational operators $(\cdot,\cup,\ast)$.

Let $\M = (\D,\le,(\intM \M \rela)_{\rela \in \Rel},(\intM \M \propa)_{\propa \in \AP})$.
The semantics $\semM \M \varphi \subseteq \D$ or
$\semM \M \pi \subseteq \D \times \D$ of a state or path formula in $\sfPDL$
is defined below.
The state formula $\existsp{\pi}{\varphi}$ is true at a point $\ela \in \D$
in $\M$ (that is, $\ela \in \semM \M {\existsp{\pi}{\varphi}}$) if there exists some $\elb \in \D$ such that
$(\ela,\elb)$ satisfies $\pi$ and $\varphi$ is true at $\elb$.
The path formula $\test{\varphi}$ is stationary and tests if the state formula
$\varphi$ is true.
The semantics of other formulas is straightforward:
\begin{align*}
  \semM \M \propa & \df \intM \M \propa
  & \semM \M {\varphi_1 \lor \varphi_2}
  & \df \semM \M {\varphi_1} \cup \semM \M {\varphi_2} \\
  \semM \M {\lnot \varphi} & \df \D \setminus \semM \M \varphi
  & \semM \M {\existsp{\pi}{\varphi}}
  & \df \{ \ela \in \D \mid \exists \elb \in \semM \M \varphi,\
    (\ela,\elb) \in \semM \M \pi \} \\[2ex]
  \semM \M \rela & \df \intM \M \rela
  & \semM \M {\test{\varphi}} & \df \{(\ela,\ela) \mid \ela \in \semM \M \varphi\} \\
  \semM \M {\le} & \df {\leM}
  & \semM \M {\pi^{-1}} & \df {(\semM \M {\pi})}^{-1} \\
  \semM \M {\pi_1 \cup \pi_2} & \df {\semM \M {\pi_1}} \cup {\semM \M {\pi_2}} 
  & \semM \M {\pi_1 \cap \pi_2} & \df {\semM \M {\pi_1}} \cap {\semM \M {\pi_2}} \\
  \semM \M {\pic \pi} & \df (\D \times \D) \setminus \semM \M \pi
  & \semM \M {\pi_1 \cdot \pi_2} & \df {\semM \M {\pi_1}} \comp {\semM \M {\pi_2}} \, .
\end{align*}
For simplicity, we will often write $\sem \varphi$ or $\sem \pi$ instead of
$\semM \M \varphi$ and $\semM \M \pi$.
We also write $\M,\ela \models \varphi$ if $\ela \in \semM \M \varphi$, and
$\M,\ela,\elb \models \pi$ if $(\ela,\elb) \in \semM \M \pi$.

We will use the abbreviations $\True \df \propa \lor \lnot \propa$,
$\False \df \lnot \True$, ${\lefta} \df (\righta)^{-1}$,
${\rightp} \df \pic {\lefta}$, ${\leftp} \df \pic {\righta}$ and
$\existsptrue \pi \df \existsp \pi \True$.
For all $\sfPDL$ formulas $\pi$, we also define a state formula
$\Loop \pi \df \existsptrue {\pi \cap \test{\True}}$ which holds
at $\ela$ if and only if $(\ela,\ela) \in \sem \pi$.

\begin{example}
  Suppose that $\Rel = \{ +q \mid q \in \mathbb{Q} \}$, and that we consider
  only models over $\mathbb{R}$ and with $\sem {+q} = \{(r,r+q) \mid r \in \mathbb{R}\}$.
  Let $q,r \in \mathbb{Q}_{\ge 0}$ and $\propa,\propb \in \AP$.
  The formula $\propa \Until_{(q,r)} \propb$ of metric temporal logic,
  which holds at time $t \in \mathbb{R}$ if there exists $t+q < t' < t+r$ such that
  $t' \in \sem \propb$ and for all $t < t'' < t'$,  $t'' \in \sem \propa$,
  can be expressed in $\sfPDL$ as follows:
  \[
    \propa \Until_{(q,r)} \propb \equiv
    \left\langle{(+q \cdot{\rightp}) \cap (+r \cdot {\leftp}) \cap
    \pic {({\rightp} \cdot \test{\lnot \propa} \cdot {\rightp})}}\right\rangle
    {{\propb}}\, .
  \]
\end{example}

\subparagraph{An \intpr fragment of star-free PDL.}
We say that a path formula $\pi \in \sfPDL$ is \emph{\intpr} if for all $\M$,
$\semM \M \pi$ is \intpr.
Notice that $\le$ and $\test{\varphi}$ (for all $\varphi$) are \intpr.
By Lemma~\ref{lem:closure} (and assumption on $\sem \rela$), all $\sfPDL$
formulas constructed without the boolean operators $\cup$ and $\compl$ are
\intpr.
However, the complement or the union of \intpr relations are not in general
\intpr.
We define below a fragment of $\sfPDL$ where all path formulas are \intpr,
and which will turn out to be as expressive as $\sfPDL$ (and in fact, $\FO$)
when it comes to \emph{state} formulas.
To do so, we will introduce several restrictions of $\pic \pi$ which will
turn out to be \intpr, and which will suffice to characterize $\pic \pi$.

Let us first look at the different reasons for which we may have
$(\ela,\elb) \in \sem {\pic \pi}$, assuming that $\pi$ is \intpr.
To begin with, we focus on $\ela$.
One first sufficient condition for having $\elb \notin \sem \pi (\ela)$
is that $\sem \pi (\ela) = \emptyset$.
Now, suppose $\sem \pi (\ela) \neq \emptyset$.
If $\pi$ is \intpr, there are only three possible cases in which
$\elb \notin \sem \pi (\ela)$: $\elb < \sem \pi (\ela)$, or
$\sem \pi (\ela) < \elb$, or $\sem {\pi^{-1}} (\elb) = \emptyset$.
We define formulas $\leftpi \pi$ and $\rightpi \pi$ corresponding
respectively to the first two cases. We let
\begin{align*}
  \leftpi \pi
  & = \test{\existsptrue{\pi}} \cdot \pic{(\pi \cdot {\righta})} \, ,
    \quad \text{i.e.}
  & (\ela,\elb) \in \sem {\leftpi \pi}
  & \quad\text{iff}\quad
    b < \sem \pi (\ela) \neq \emptyset \\
  \rightpi \pi
  & = \test{\existsptrue{\pi}} \cdot \pic{(\pi \cdot {\lefta})} \, ,
    \quad \text{i.e.}
  & (\ela,\elb) \in \sem {\rightpi \pi}
  & \quad\text{iff}\quad
    b > \sem \pi (\ela) \neq \emptyset \, .
\end{align*}
Now, if we look at $\sem {\pi^{-1}}(\elb)$ instead of $\sem {\pi} (\ela)$,
we can make the same observations, by symmetry:
we have $(\ela,\elb) \in \sem {\pic \pi}$ if and only if
$\ela \notin \sem {\pi^{-1}}(\elb)$, and if $\pi$ is \intpr, there are again
only four possible cases:
$\sem {\pi^{-1}} (\elb) = \emptyset$,
or $\ela < \sem {\pi^{-1}}(\elb)$,
or $\ela > \sem {\pi^{-1}}(\elb)$,
or $\sem {\pi} (\ela) = \emptyset$.

Unfortunately, the formulas $\leftpi \pi$ and $\rightpi \pi$ are still
not \intpr in general.
However, if we take a more symmetric restriction of $\pic \pi$, where
we look at all the possible positions of $b$ and $a$ relatively to
$\sem \pi (\ela)$ and $\sem {\pi^{-1}} (\elb)$, we obtain four cases,
illustrated in Figure~\ref{fig:cll}, which we will later show correspond
to \intpr restrictions of $\pic \pi$.

More precisely, let
\begin{align*}
  \cll \pi
  & \df \leftpi \pi \cap {\left(\leftpi {\left(\pi^{-1}\right)}\right)}^{-1}
  \, , \text{i.e.}
  & (\ela,\elb) \in \sems {\cll \pi}
  & \quad\text{iff}\quad
    \begin{cases}
      b < \sems \pi (\ela) \neq \emptyset\\
      \ela < \sems {\pi^{-1}}(\elb) \neq \emptyset
    \end{cases}\\
  \clr \pi
  & \df \leftpi \pi \cap {\left(\rightpi {\left(\pi^{-1}\right)}\right)}^{-1}
  \, ,\text{i.e.}
  & (\ela,\elb) \in \sems {\clr \pi}
  & \quad\text{iff}\quad
    \begin{cases}
      b < \sems \pi (\ela) \neq \emptyset\\
      \ela > \sems {\pi^{-1}}(\elb) \neq \emptyset
    \end{cases}\\
  \crl \pi
  & \df \rightpi \pi  \cap {\left(\leftpi {\left(\pi^{-1}\right)}\right)}^{-1}
    \, ,\text{i.e.}
  & (\ela,\elb) \in \sems {\crl \pi}
  & \quad\text{iff}\quad
    \begin{cases}
      b > \sems \pi (\ela) \neq \emptyset\\
      \ela < \sems {\pi^{-1}}(\elb) \neq \emptyset
    \end{cases}\\
  \crr \pi
  & \df \rightpi \pi \cap {\left(\rightpi {\left(\pi^{-1}\right)}\right)}^{-1}
    \, ,\text{i.e.}
  & (\ela,\elb) \in \sems {\crr \pi}
  & \quad\text{iff}\quad
    \begin{cases}
      b > \sems \pi (\ela) \neq \emptyset\\
      \ela > \sems {\pi^{-1}}(\elb) \neq \emptyset \, .
    \end{cases}
\end{align*}
Notice that $\crl \pi \equiv {(\clr {(\pi^{-1})})}^{-1}$.

\begin{figure}
  \begin{tikzpicture}[auto,>=stealth,yscale=1.2,font=\small]
    \node[dot,label={[name=aa]above:{$\ela$}}] (a) at (0,1) {};
    \node[dot,label={[name=bb]below:{$\elb$}}] (b) at (0,0) {};
    \draw[fill,mygreen,opacity=0.3] (a) -- (0.7,0) -- (1.7,0) -- (a);
    \node[mygreen] (a') at (1.2,-0.25) {$\sem \pi (\ela)$};
    \draw[fill,violet,opacity=0.3] (b) -- (0.7,1) -- (1.7,1) -- (b);
    \node[violet] (b') at (1.2,1.25) {$\sem {\pi^{-1}}(\elb)$};
    \path[->] (a) edge node[left] {$\cll \pi$} (b);
    \node at ($(bb)!0.3!(a')$) {$<$};
    \node at ($(aa)!0.3!(b')$) {$<$};
    
    \begin{scope}[xshift=4.2cm]
      \node[dot,label={[name=aa2]above:{$\ela$}}] (a2) at (0,1) {};
      \node[dot,label={[name=bb2]below:{$\elb$}}] (b2) at (0,0) {};
      \draw[fill,mygreen,opacity=0.3] (a2) -- (0.7,0) -- (1.7,0) -- (a2);
      \node[mygreen] (a2') at (1.2,-0.25) {$\sem \pi (\ela)$};
      \draw[fill,violet,opacity=0.3] (b2) -- (-0.7,1) -- (-1.7,1) -- (b2);
      \node[violet] (b2') at (-1.2,1.25) {$\sem {\pi^{-1}}(\elb)$};
      \path[->] (a2) edge node[right,pos=0.6,inner sep=1pt] {$\clr \pi$} (b2);
      \node at ($(bb2)!0.3!(a2')$) {$<$};
      \node at ($(b2')!0.7!(aa2)$) {$<$};
    \end{scope}

    \begin{scope}[xshift=8.3cm]
      \node[dot,label={[name=aa3]above:{$\ela$}}] (a3) at (0,1) {};
      \node[dot,label={[name=bb3]below:{$\elb$}}] (b3) at (0,0) {};
      \draw[fill,mygreen,opacity=0.3] (a3) -- (-0.7,0) -- (-1.7,0) -- (a3);
      \node[mygreen] (a3') at (-1.2,-0.25) {$\sem \pi (\ela)$};
      \draw[fill,violet,opacity=0.3] (b3) -- (0.7,1) -- (1.7,1) -- (b3);
      \node[violet] (b3') at (1.2,1.25) {$\sem {\pi^{-1}}(\elb)$};
      \path[->] (a3) edge node[left,pos=0.6,inner sep=1pt] {$\crl \pi$} (b3);
      \node at ($(a3')!0.7!(bb3)$) {$<$};
      \node at ($(aa3)!0.3!(b3')$) {$<$};
    \end{scope}
    
    \begin{scope}[xshift=12.5cm]
      \node[dot,label={[name=aa4]above:{$\ela$}}] (a4) at (0,1) {};
      \node[dot,label={[name=bb4]below:{$\elb$}}] (b4) at (0,0) {};
      \draw[fill,mygreen,opacity=0.3] (a4) -- (-0.7,0) -- (-1.7,0) -- (a4);
      \node[mygreen] (a4') at (-1.2,-0.25) {$\sem \pi (\ela)$};
      \draw[fill,violet,opacity=0.3] (b4) -- (-0.7,1) -- (-1.7,1) -- (b4);
      \node[violet] (b4') at (-1.2,1.25) {$\sem {\pi^{-1}}(\elb)$};
      \path[->] (a4) edge node[right] {$\crr \pi$} (b4);
      \node at ($(a4')!0.7!(bb4)$) {$<$};
      \node at ($(b4')!0.7!(aa4)$) {$<$};
    \end{scope}
  \end{tikzpicture}
  \caption{Definition of $\cll \pi$, $\clr \pi$, $\crl \pi$ and $\crr \pi$,
    from left to right.\label{fig:cll}}
\end{figure}

Let $\sfPDLmi$ be the following restriction of $\sfPDL$:
\begin{align*}
  \varphi & ::= \propa \mid \varphi \lor \varphi \mid \lnot \varphi \mid
  \existsp{\pi}{\varphi} \\
  \pi & ::= {\rela} \mid {\le}
        \mid \test{\varphi} 
        \mid \pi^{-1} \mid \pi \cdot \pi \mid \pi \cap \pi
        \mid \cll \pi \mid \clr \pi \mid \crl \pi \mid \crr \pi \, .
\end{align*}

\begin{lemma}\label{lem:intpr}
  All $\sfPDLmi$ formulas are \intpr.
\end{lemma}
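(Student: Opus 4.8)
The plan is to argue by induction on the structure of the path formulas of $\sfPDLmi$; the state formulas require no attention, since $\test{\varphi}$ is \intpr regardless of $\varphi$. For the base cases $\pi = {\rela}$, $\pi = {\le}$ and $\pi = \test{\varphi}$ the relations are \intpr by the standing assumption on models and by the facts recorded just before the statement. For the connectives $\pi^{-1}$, $\pi_1 \comp \pi_2$ and $\pi_1 \cap \pi_2$, the induction hypothesis makes the immediate subformulas \intpr, and closure is then immediate from Lemma~\ref{lem:closure}, parts \ref{closure-converse}, \ref{closure-comp} and \ref{closure-inter} respectively. All the real content is therefore concentrated in the four operators: assuming $\pi$ \intpr (hence also $\pi^{-1}$ \intpr by Lemma~\ref{lem:closure}), I must show that each of $\cll\pi$, $\clr\pi$, $\crl\pi$ and $\crr\pi$ is \intpr.

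Before doing any work I would cut the four operators down to two, using two symmetries. First, unfolding the definitions gives the converse identities
\[
  {\left(\cll\pi\right)}^{-1} = \cll{\left(\pi^{-1}\right)}, \qquad
  {\left(\crr\pi\right)}^{-1} = \crr{\left(\pi^{-1}\right)}, \qquad
  {\left(\clr\pi\right)}^{-1} = \crl{\left(\pi^{-1}\right)}
\]
(the last is the remark made right after the definitions). Being \intpr amounts to two conditions: (i) images of intervals are intervals of the image, and (ii) preimages of intervals are intervals of the preimage; and condition (ii) for a relation is exactly condition (i) for its converse. Combined with the identities above and the fact that $\pi^{-1}$ is \intpr whenever $\pi$ is, this shows that it suffices to prove condition (i) for all four operators. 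Second, reversing the linear order on $\D$ sends an \intpr relation to an \intpr relation (intervals do not depend on the direction of the order) while exchanging ${<}$ and ${>}$, which turns $\cll$ into $\crr$ and $\clr$ into $\crl$; so condition (i) for $\crr$ and $\crl$ follows from condition (i) for $\cll$ and $\clr$. The whole lemma thus reduces to proving condition (i) for the two representatives $\cll\pi$ and $\clr\pi$, for every \intpr relation $\sem\pi$.

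I would prove both by the same scheme; take $\cll\pi$ and write $\rho = \sems{\cll\pi}$. Fixing $\elb_1 \le \elb \le \elb_2$ with $\elb_1,\elb_2 \in \rho(I)$ and $\elb \in \rho(\D)$, I must produce $\ela \in I$ with $\elb < \sem\pi(\ela) \neq \emptyset$ and $\ela < \sems{\pi^{-1}}(\elb) \neq \emptyset$. Unpacking the hypotheses yields witnesses $\ela_1,\ela_2 \in I$ for $\elb_1,\elb_2$, a witness $\ela_0 \in \D$ for $\elb$, and a nonempty preimage set $P = \sems{\pi^{-1}}(\elb)$. Here $\ela_2$ already meets the first condition, since $\elb \le \elb_2 < \sem\pi(\ela_2)$; so we are done immediately if $\ela_2 < P$, and likewise if the witness $\ela_0$ (which satisfies both conditions) happens to lie in $I$. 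The plan is to dispose of the remaining configurations by a case analysis on the relative positions of $\ela_0,\ela_1,\ela_2$ and of the set $P$, each time applying the \intpr property of $\pi$ — to the interval of $\D$ bounded by two of the witnesses, and to the target interval bounded by $\elb_1,\elb$ or by $\elb,\elb_2$ — in order to relocate a witness into $I$ while preserving both defining conditions.

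The main obstacle, and the only place where the \intpr hypothesis on $\pi$ is genuinely used, is exactly this relocation step: each of the naturally available points (typically $\ela_1$ and $\ela_2$) satisfies only one of the two defining conditions of $\cll\pi$, and the difficulty is to manufacture a single point of $I$ meeting both at once. I expect this to require invoking both clauses of the \intpr-ness of $\pi$ simultaneously — the ``images of intervals'' clause to control $\sem\pi(\ela)$ and the ``preimages of intervals'' clause to control $P = \sems{\pi^{-1}}(\elb)$ — together with a careful check that the boundary cases (a witness sitting at an extremity of $I$, or $P$ having no least element) are all covered. The argument for $\clr\pi$ is entirely parallel, with the condition $\ela < P$ replaced by $\ela > P$ and the position analysis reflected accordingly.
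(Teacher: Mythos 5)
Your induction scaffolding and both symmetry reductions are sound: the converse identities ${(\cll\pi)}^{-1} = \cll{(\pi^{-1})}$, ${(\crr\pi)}^{-1} = \crr{(\pi^{-1})}$, ${(\clr\pi)}^{-1} = \crl{(\pi^{-1})}$ do hold, order reversal does exchange $\cll \leftrightarrow \crr$ and $\clr \leftrightarrow \crl$ while preserving interval-preservation, and together with closure of \intpr relations under converse (Lemma~\ref{lem:closure}) this validly reduces the lemma to condition (i) for $\cll\pi$ and $\clr\pi$. This organization is even slightly more systematic than the paper's, which proves condition (i) for $\cll\pi$, both conditions for $\clr\pi$ directly, and then obtains $\crl\pi$ via the converse identity and $\crr\pi$ by symmetry.

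However, the proposal stops exactly where the mathematical content begins. Having fixed $\elb_1 \le \elb \le \elb_2$ with witnesses $\ela_1,\ela_2 \in I$ and $P = \sems{\pi^{-1}}(\elb) \neq \emptyset$, you observe that $\ela_2$ satisfies the first defining condition of $\cll\pi$ (since $\elb \le \elb_2 < \sem\pi(\ela_2) \neq \emptyset$) and that you are done if $\ela_2 < P$; but the remaining case is addressed only by a plan (``dispose of the remaining configurations by a case analysis \dots\ relocate a witness into $I$'') that is never carried out, and which you yourself flag as the main obstacle. This is precisely the step where the \intpr hypothesis on $\pi$ must be invoked, and the resolution is not a relocation at all: the remaining case is contradictory, so $\ela_2$ itself always works. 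Indeed, suppose some $\elc \in P$ satisfies $\elc \le \ela_2$. Since $(\ela_2,\elb_2) \in \sems{\cll\pi}$, there is $\elc_2 \in \sems{\pi^{-1}}(\elb_2)$ with $\ela_2 < \elc_2$. Then $\elc \le \ela_2 < \elc_2$, the points $\elc$ and $\elc_2$ have images $\elb$ and $\elb_2$ in the interval $[\elb,\elb_2]$, and $\sem\pi(\ela_2) \neq \emptyset$; so the preimage clause of the \intpr property of $\pi$ forces $\ela_2$ to have an image in $[\elb,\elb_2]$, contradicting $\elb_2 < \sem\pi(\ela_2)$. Hence $\ela_2 < P$ holds unconditionally, no second witness or combination of witnesses is needed, and only one clause of interval-preservation is used (not both simultaneously, as you predicted); the analogous one-step contradiction handles $\clr\pi$. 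Without this argument, or an equivalent one, the proposal is a correct setup with an unproved core, and does not establish the lemma.
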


\begin{proof}
  We proceed by induction on the formula.
  By assumption, $\rela$ is \intpr for all $\rela \in \Rel$.
  Moreover, $\righta$ and $\test{\varphi}$ are \intpr.
  For $\pi^{-1}$, $\pi_1 \cdot \pi_2$ and $\pi_1 \cap \pi_2$,
  we apply Lemma~\ref{lem:closure}.

  Suppose that $\pi$ is \intpr.
  Let us show that $\cll \pi$ is \intpr.
  Notice that ${(\cll \pi)}^{-1} \equiv \cll {(\pi^{-1})}$.
  So we only need to show that for all intervals $I$, for all
  $\elb_1,\elb_2 \in \sem {\cll \pi} (I)$ and $\elb_1 \le \elb \le \elb_2$
  such that $\sem {(\cll \pi)^{-1}} (\elb) \neq \emptyset$, there exists
  $\ela \in I$ such that $(\ela,\elb) \in \sem {\cll \pi}$.
  Let $\ela_2 \in I$ such that $(\ela_2,\elb_2) \in \sem {\cll \pi}$.
  Let us show that we can in fact take $\ela = \ela_2$.
  The proof is illustrated in the picture below.
  \begin{center}
    \begin{tikzpicture}[auto,>=stealth,yscale=1.2,font=\small]
      \node[dot,label={[name=aa2]above:{$\ela_2$}}] (a2) at (0,1) {};
      \node[dot,label={[name=bb2]below:{$\elb_2$}}] (b2) at (0,0) {};
      \node[dot,label={[name=bb]below:{$\elb$}}] (b) at (-1.2,0) {};
      \node[dot,label={[name=bb1]below:{$\elb_1$}}] (b1) at (-2.4,0) {};
      \draw[fill,mygreen,opacity=0.3] (a2) -- (0.7,0) -- (1.7,0) -- (a2);
      \node[mygreen] (a') at (1.2,-0.25) {$\sem \pi (\ela_2)$};
      \draw[fill,violet,opacity=0.2] (b2) -- (0.7,1) -- (1.7,1) -- (b2);
      \path[->] (a2) edge (b2);
      \node at ($(bb2)!0.3!(a')$) {$<$};
      \node[dot,label={[name=cc]above:{$c$}}] (c) at (-0.8,1) {};
      \node[dot,label={[name=cc2]above:{$c_2$}}] (c2) at (1,1) {};
      \node[violet,dot] (d) at (-0.6,0) {};
      
      \path[->,violet]
      (b) edge node[pos=0.4] {$\pi^{-1}$} (c)
      (b2) edge node[right,pos=0.7] {$\pi^{-1}$} (c2)
      (d) edge[dashed] (a2);

      \node at ($(bb)!0.5!(bb2)$) {$\le$};
      \node at ($(bb1)!0.5!(bb)$) {$\le$};
      \node at ($(cc)!0.5!(aa2)$) {$\le$};
      \node at ($(aa2)!0.5!(cc2)$) {$<$};
    \end{tikzpicture}
  \end{center}
  First, we have $b \le b_2 < \sem {\pi} (\ela_2) \neq \emptyset$.
  Moreover, $\sem {\pi^{-1}}(\elb) \neq \emptyset$
  (since $\sem {(\cll \pi)^{-1}} (\elb) \neq \emptyset$).
  Now, suppose towards a contradiction that
  $a_2 \not < \sem {\pi^{-1}}(\elb)$.
  Let $c \in \sem {\pi^{-1}}(\elb)$ such that $c \le a_2$.
  Since $(a_2,b_2) \in \sem {\cll \pi}$, there exists $c_2 > a_2$
  such that $(b_2,c_2) \in \sem {\pi^{-1}}$.
  We then have $c \le a_2 < c_2$ and $\sem {\pi} (a_2) \neq \emptyset$.
  Since $\pi$ is \intpr, we obtain $a_2 \in \sem {\pi^{-1}} ([b,b_2])$,
  a contradiction with the fact that $b_2 < \sem \pi (a_2)$.
  Thus, $(a_2,b) \in \sem {\cll \pi}$.

  Let us show that $\clr \pi$ is also \intpr.
  Similarly to the previous case, we show that for all
  $(\ela_2,\elb_2) \in \sem {\clr \pi}$
  and $\elb \le \elb_2$ such that $\sem {(\clr \pi)^{-1}}(b) \neq \emptyset$,
  we have $(\ela_2,\elb) \in \sem {\clr \pi}$.
  \begin{center}
    \begin{tikzpicture}[auto,>=stealth,yscale=1.2,font=\small]
      \node[dot,label={[name=aa2]above:{$\ela_2$}}] (a2) at (0,1) {};
      \node[dot,label={[name=bb2]below:{$\elb_2$}}] (b2) at (0,0) {};
      \node[dot,label={[name=bb]below:{$\elb$}}] (b) at (-1.2,0) {};
      \node[dot,label={[name=bb1]below:{$\elb_1$}}] (b1) at (-2.4,0) {};
      \draw[fill,mygreen,opacity=0.3] (a2) -- (0.7,0) -- (1.7,0) -- (a2);
      \node[mygreen] (a') at (1.2,-0.25) {$\sem \pi (\ela_2)$};
      \draw[fill,violet,opacity=0.2] (b2) -- (-0.7,1) -- (-1.7,1) -- (b2);
      \path[->] (a2) edge (b2);
      \node at ($(bb2)!0.3!(a')$) {$<$};
      \node[dot,label={[name=cc2]above:{$c_2$}}] (c2) at (-1.2,1) {};
      \node[dot,label={[name=cc]above:{$c$}}] (c) at (1,1) {};
      \node[violet,dot] (d) at (-0.4,0) {};
      
      \path[->,violet]
      (b) edge (c)
      (b2) edge (c2)
      (d) edge[dashed] (a2);

      \node at ($(bb)!0.5!(bb2)$) {$\le$};
      \node at ($(bb1)!0.5!(bb)$) {$\le$};
      \node at ($(cc)!0.5!(aa2)$) {$\le$};
      \node at ($(aa2)!0.5!(cc2)$) {$<$};
    \end{tikzpicture}
  \end{center}
  First, $\elb \le \elb_2 < \sem {\pi} (\ela_2) \neq \emptyset$, and
  $\sem {\pi^{-1}}(\elb) \neq \emptyset$.
  Suppose towards a contradiction that $\sem {\pi^{-1}} (\elb) \not < \ela_2$.
  Let $\elc \in \sem {\pi^{-1}} (b)$ such that $\ela_2 \le \elc$,
  and $\elc_2 \in \sem {\pi^{-1}}(\elb_2)$.
  We have $\elc_2 < \ela_2 \le \elc$, and $\sem \pi (\ela_2) \neq \emptyset$.
  Since $\pi$ is \intpr, we obtain $\ela_2 \in \sem {\pi^{-1}}([\elb,\elb_2])$,
  a contradiction with the fact that $\elb_2 < \sem {\pi} (\ela_2)$.
  Symmetrically, let $J$ be an interval,
  $\ela_1, \ela_2 \in \sem {(\clr \pi)^{-1}}(J)$, and
  $\ela_1 \le \ela \le \ela_2$ such that $\sem {\clr \pi} (a) \neq \emptyset$.
  Then for any $\elb_1 \in J$ such that $(\ela_1,\elb_1) \in \sem {\clr \pi}$,
  we also have $(\ela,\elb_1) \in \sem {\clr \pi}$, hence
  $a \in \sem {(\clr \pi)^{-1}}(J)$.

  Since $\crl \pi \equiv {(\clr {(\pi^{-1})})}^{-1}$, this also implies that
  $\crl \pi$ is \intpr.

  Finally, the case of $\crr \pi$ is symmetric to the case of $\cll \pi$:
  for all $(\ela_1,\elb_1) \in \sem {\crr \pi}$ and $\elb_1 \le \elb$
  such that $\sem {(\crr \pi)^{-1}} (\elb) \neq \emptyset$, we have
  $(\ela_1,\elb) \in \sem {\crr \pi}$.
\end{proof}

\section{Star-free PDL is expressively equivalent to FO}\label{sec:translation}

Let $\varphi$ be a state formula in $\sfPDL$, and $\Phi(x)$ an $\FO$ formula
with a single free variable $x$.
We say that $\varphi$ and $\Phi$ are equivalent, written
$\varphi \equiv \Phi(x)$, if for all $\M$ and elements $\ela$ in $\M$,
we have $\M,\ela \models \varphi$ if and only if
$\M,[x \mapsto \ela] \models \Phi(x)$.
Similarly, for a path formula $\pi \in \sfPDL$ and an $\FO$ formula
$\Phi(x,y)$ with exactly two free variables $x$ and $y$, we write
$\pi \equiv \Phi(x,y)$ if for all $\M$ and elements $\ela,\elb$ in $\M$,
we have $\M,\ela,\elb \models \pi$ if and only if
$\M,[x \mapsto \ela, y \mapsto \elb] \models \Phi(x,y)$.

\subparagraph{From $\boldsymbol{\sfPDL}$ to $\boldsymbol{\FOt}$.}
An easy induction shows that any formula in $\sfPDL$ can be translated
into an $\FO$ formula which uses at most three distinct variables:

\begin{lemma}\label{lem:PDL-to-FO}
  For every state formula $\varphi \in \sfPDL$, there exists a formula
  $\trad{\varphi}(x) \in \FOt$ such that
  $\varphi \equiv \trad{\varphi}(x)$.
  For every path formula $\pi \in \sfPDL$, there exists a formula
  $\trad{\pi}(x,y) \in \FOt$ such that $\pi \equiv \trad{\pi}(x,y)$.
\end{lemma}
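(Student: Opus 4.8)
The plan is to define the two maps $\trad{\varphi}$ and $\trad{\pi}$ by a single simultaneous induction on the structure of state and path formulas, working throughout with a fixed pool of three variables $\{x,y,z\}$. To make the recycling of variables transparent, I would prove a form of the statement parametrised by the names given to the free variables: for every state formula $\varphi$ and every $u \in \{x,y,z\}$ a formula $\trad{\varphi}_u$, whose only free variable is $u$ and which mentions only variables from $\{x,y,z\}$, with $\M,[u \mapsto \ela] \models \trad{\varphi}_u$ iff $\ela \in \semM\M\varphi$; and for every path formula $\pi$ and every ordered pair $(u,v)$ of \emph{distinct} variables from $\{x,y,z\}$ a formula $\trad{\pi}_{u,v}$, with free variables among $\{u,v\}$ and mentioning only $\{x,y,z\}$, with $\M,[u\mapsto\ela,v\mapsto\elb] \models \trad{\pi}_{u,v}$ iff $(\ela,\elb) \in \semM\M\pi$. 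Lemma~\ref{lem:PDL-to-FO} is then obtained by reading off $\trad{\varphi} = \trad{\varphi}_x$ and $\trad{\pi} = \trad{\pi}_{x,y}$.

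Almost every case follows the semantics literally and leaves the variable budget untouched. For state formulas, $\trad{\propa}_u = \propa(u)$, disjunction and negation commute with the translation, and the diamond is translated by $\trad{\existsp{\pi}{\psi}}_u = \exists v.\,(\trad{\pi}_{u,v} \land \trad{\psi}_v)$ for any $v \neq u$ in $\{x,y,z\}$. For path formulas, $\trad{\rela}_{u,v} = \rela(u,v)$, $\trad{\le}_{u,v} = u \le v$, $\trad{\test{\psi}}_{u,v} = (u = v \land \trad{\psi}_u)$, converse swaps the two free variables ($\trad{\pi^{-1}}_{u,v} = \trad{\pi}_{v,u}$), and the Boolean path operators $\cap$, $\cup$, $\pic{(\cdot)}$ translate to the corresponding Boolean connectives applied to $\trad{\pi_1}_{u,v}$ and $\trad{\pi_2}_{u,v}$. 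In each of these cases the induction hypothesis already produces formulas over $\{x,y,z\}$, and correctness is immediate from the definition of the semantics.

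The only case in which all three variables are genuinely needed is composition. For a pair $(u,v)$, let $w$ be the remaining variable of $\{x,y,z\}$ and set $\trad{\pi_1 \comp \pi_2}_{u,v} = \exists w.\,(\trad{\pi_1}_{u,w} \land \trad{\pi_2}_{w,v})$. The whole point of the parametrised invariant is that $\trad{\pi_1}_{u,w}$ and $\trad{\pi_2}_{w,v}$ each live entirely inside $\{x,y,z\}$, so binding the shared midpoint $w$ recycles the third variable rather than introducing a fourth one; here $u$ occurs free in the first conjunct while any occurrences of $u$ in the second conjunct are internally quantified (and symmetrically for $v$), so no genuine clash arises. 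I expect the main --- and essentially only --- subtlety of the proof to be exactly this bookkeeping: one must check that in the composition step the same variable may legitimately occur both free (in one conjunct) and bound (as an internal scratch variable in the other), and that standard first-order shadowing makes the resulting formula correct. Once this is granted, correctness of the whole construction follows by a routine induction mirroring the clauses of the semantics, and the variable bound is guaranteed by the fixed three-element pool.
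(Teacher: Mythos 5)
Your proposal is correct and matches the paper's approach: the paper dispatches Lemma~\ref{lem:PDL-to-FO} as ``an easy induction,'' and your simultaneous induction over a fixed pool $\{x,y,z\}$, with variable recycling in the composition case $\exists w.\,(\trad{\pi_1}_{u,w} \land \trad{\pi_2}_{w,v})$, is precisely that induction spelled out. The bookkeeping invariant you isolate (translations parametrised by the names of their free variables, with shadowing handling bound reuse) is exactly the standard way to make the three-variable bound rigorous.
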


\medskip

For the other direction, we will see that the fragment $\sfPDLm$ of $\sfPDL$
defined below will be sufficient:
\begin{align*}
  \varphi & ::= \propa \mid \varphi \lor \varphi \mid \lnot \varphi \mid
  \existsp{\pi}{\varphi} \mid \Loop \pi \\
  \pi & ::= {\rela} \mid {\le}
        \mid \test{\varphi} 
        \mid \pi^{-1} \mid \pi \cdot \pi
        \mid \cll \pi \mid \clr \pi \mid \crl \pi \mid \crr \pi \, .
\end{align*}
This fragment is a restriction of $\sfPDLmi$, where the intersection is only
used for $\Loop \pi$ formulas.

\subparagraph{From $\boldsymbol{\FO}$ to $\boldsymbol{\sfPDLm}$.}
The main result of the paper is an effective translation of $\FO$ formulas into
positive boolean combinations of formulas in $\sfPDLm$:

\begin{theorem}\label{thm:FO-to-PDL}
  Every formula $\Phi \in \FO$ with at least one free variable is equivalent
  to a positive boolean combination of formulas of the form $\trad \pi(x,y)$,
  where $x,y \in \Free(\Phi)$ and $\pi \in \sfPDLm$.
\end{theorem}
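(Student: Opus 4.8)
The plan is to prove the statement by structural induction on $\Phi$, establishing the invariant that any $\Phi \in \FO$ with $\Free(\Phi) \neq \emptyset$ is equivalent to a positive boolean combination of formulas $\trad\pi(x,y)$ with $x,y \in \Free(\Phi)$ and $\pi \in \sfPDLm$. The atomic cases are immediate: $\propa(x)$, $x=y$, $x\le y$ and $\rela(x,y)$ are translated respectively by $\trad{\test\propa}(x,x)$, $\trad{\test\True}(x,y)$, $\trad{\righta}(x,y)$ and $\trad\rela(x,y)$, all with $\pi \in \sfPDLm$. The disjunction case is trivial, since a disjunction of positive boolean combinations is again one; a binary atom $\trad\pi(x,y)$ simply leaves the remaining free variables unconstrained, so no padding is needed.

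The first substantial case is negation. After pushing negations inward by De~Morgan, it suffices to express each $\lnot\trad\pi(x,y)$, i.e.\ $\trad{\pic\pi}(x,y)$, as a positive boolean combination. The key ingredient is a \emph{complement lemma}: since every $\pi \in \sfPDLm$ is \intpr by Lemma~\ref{lem:intpr}, the pairs outside $\sem\pi$ are exactly those for which $\sem\pi(\ela)=\emptyset$, or $\sem{\pi^{-1}}(\elb)=\emptyset$, or both images are nonempty and $\elb$ lies strictly left or right of the interval $\sem\pi(\ela)$. Because $\sem\pi(\ela)$ is an interval of the image and $\elb$ has a preimage, the last situation splits into exactly the four corner cases of Figure~\ref{fig:cll}, giving
\[
  \lnot\trad\pi(x,y) \equiv \trad{\cll\pi}(x,y) \lor \trad{\clr\pi}(x,y) \lor \trad{\crl\pi}(x,y) \lor \trad{\crr\pi}(x,y) \lor \trad{\lnot\existsptrue\pi}(x) \lor \trad{\lnot\existsptrue{\pi^{-1}}}(y).
\]
The two unary conjuncts, being \emph{state} formulas of $\sfPDLm$, are reinserted as binary atoms through $\trad\chi(x) \equiv \trad{\test\chi\cdot\righta}(x,y)\lor\trad{\test\chi\cdot\lefta}(x,y)$ (and symmetrically for $y$), where $\lefta = \righta^{-1}\in\sfPDLm$. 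All occurrences are positive and all path formulas lie in $\sfPDLm$, so Lemma~\ref{lem:intpr} is exactly what makes this decomposition valid.

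The main case, and the main obstacle, is existential quantification $\exists z.\,\Phi$. I would first put the combination obtained for $\Phi$ in disjunctive normal form and distribute $\exists z$ over the disjunction, reducing to a single conjunction $\bigwedge_k \trad{\pi_k}(u_k,v_k)$ with $u_k,v_k\in\Free(\Phi)\cup\{z\}$. I then run a finite case analysis on the order type of $z$ relative to the other free variables: the case $z=x_i$ is handled by substitution, and otherwise $z$ falls in a determined gap of the sorted free variables. Fixing such a gap, the conjuncts split into those not mentioning $z$ (kept verbatim), those of the form $\Loop{\pi_k}$ at $z$ (collected as a test at $z$), and those relating $z$ to a single free variable. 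The existence of a suitable $z$ is then expressed by a single path formula threading through the free variables and $z$ in their now-fixed order, using composition $\cdot$ to move along $\righta$ and tests at $z$ to check the local conditions.

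The delicate point is that $z$ may be constrained simultaneously to \emph{several} distinct free variables, whereas $\sfPDLm$ forbids general intersection of path formulas. This is precisely what $\Loop$ is for: once the order type is fixed, each sideways constraint from $z$ to an anchor is verified by a $\Loop{\rho}$ placed at the position of $z$ along the thread, where $\rho$ leaves $z$, reaches the anchor, and returns to $z$ (the intersection with $\test\True$ built into $\Loop$ forcing the return). Making this routing precise — identifying each anchor from $z$ using only the linear order, and checking possibly several constraints to possibly several anchors without resorting to general intersection — is the technical heart of the argument. By contrast, the atomic, disjunction and negation cases follow mechanically from the complement lemma together with the closure properties of Lemma~\ref{lem:closure}.
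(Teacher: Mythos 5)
Your atomic, disjunction, and negation cases are sound and essentially coincide with the paper's treatment: your six-way decomposition of $\lnot\trad\pi(x,y)$ into the four corner formulas plus the two emptiness conditions is exactly Lemma~\ref{lem:complement} (the paper writes the emptiness cases as $\test{\lnot\existsptrue{\pi}}\cdot{\righta}$, $\test{\lnot\existsptrue{\pi}}\cdot{\lefta}$, etc., which is the same padding you use), and it indeed rests on Lemma~\ref{lem:intpr}. The genuine gap is the existential case, which you yourself call ``the technical heart'' and leave unresolved --- and the direction you sketch for it cannot be made to work. Your plan is to fix the order type of $z$ among the anchors and verify each constraint $\trad{\pi_i}(y_i,z)$ by a formula $\Loop{\rho}$ placed at $z$, where $\rho$ travels from $z$ to the anchor $\nu(y_i)$ and back. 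But path formulas have no access to the valuation: inside $\rho$ there is no way to assert ``the point reached is $\nu(y_i)$''. A formula $\Loop{\pi_i^{-1}\cdot\sigma}$ evaluated at $z$ only asserts that \emph{some} point $w$ with $(w,z)\in\sem{\pi_i}$ admits a way back to $z$; it cannot force $w$ to be the particular point interpreting $y_i$. Only the two endpoints of a path formula are anchored to free variables, so once $z$ is simultaneously constrained to three or more distinct anchors, no routing of a single thread through the sorted variables (with or without Loops, and with or without a fixed order type) captures the conjunction.

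What is missing is the paper's key idea, which is of a different nature: a Helly-type property of intervals in a linear order (Lemma~\ref{lem:intersection-intervals}), namely that finitely many intervals have a common point if and only if they pairwise intersect. Since each $\pi_i$ is \intpr, each constraint set $\sem{\pi_i}(\nu(y_i))\cap\sem{\psi}$ --- where $\psi$ collects the state formula coming from the $\Loop$ conjuncts at $z$ together with the conditions $\existsptrue{\pi_i^{-1}}$ --- is an interval of $(\sem{\psi},\le)$, so the existence of a common witness $z$ reduces to the pairwise conditions $(\nu(y_i),\nu(y_j))\in\sem{\pi_i\cdot\test{\psi}\cdot\pi_j^{-1}}$, i.e., to a conjunction of formulas $\trad{\pi_i\cdot\test{\psi}\cdot\pi_j^{-1}}(y_i,y_j)$, which are positive, binary, and anchored only at pairs of the remaining free variables (Lemma~\ref{lem:exists}); no case analysis on order types is needed. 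This pairwise reduction is precisely what resolves the multi-anchor difficulty you identified, and without it (or a substitute for it) your induction does not go through.
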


Note that the equivalent formula may also contain subformulas of the form
$\trad \pi(x,x)$.

\smallskip

Before proving Theorem~\ref{thm:FO-to-PDL}, we state some of its consequences.

\begin{corollary}
  Every formula $\Phi \in \FO$ with a single free variable is equivalent
  to some $\sfPDLm$ state formula.
  Every formula $\Phi \in \FO$ with two free variables is equivalent to some
  $\sfPDL$ path formula.
\end{corollary}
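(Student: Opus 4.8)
The plan is to derive both statements directly from Theorem~\ref{thm:FO-to-PDL}, which already reduces any $\FO$ formula with at least one free variable to a positive boolean combination of atoms $\trad\pi(u,v)$ with $\pi\in\sfPDLm$ and $u,v\in\Free(\Phi)$. The only work left is to reinterpret these atoms as $\sfPDL$ state or path formulas and to reassemble the boolean combination using the operators available in $\sfPDL$.

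For the single-variable case, where $\Free(\Phi)=\{x\}$, every atom must be of the form $\trad\pi(x,x)$ (this is precisely the situation flagged in the remark following the theorem). Semantically, $\M,[x\mapsto\ela]\models\trad\pi(x,x)$ holds iff $(\ela,\ela)\in\sem\pi$, which by definition of $\Loop\pi$ is exactly $\M,\ela\models\Loop\pi$. Hence each atom is equivalent to the state formula $\Loop\pi$, which belongs to $\sfPDLm$. Since the state-formula grammar of $\sfPDLm$ is closed under $\lor$ and $\lnot$, it is also closed under $\land$ via $\varphi_1\land\varphi_2\equiv\lnot(\lnot\varphi_1\lor\lnot\varphi_2)$, so any positive boolean combination of these $\Loop\pi$ formulas is again a $\sfPDLm$ state formula. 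This proves the first part.

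For the two-variable case, where $\Free(\Phi)=\{x,y\}$, the atoms fall into four shapes: $\trad\pi(x,y)$, $\trad\pi(y,x)$, $\trad\pi(x,x)$ and $\trad\pi(y,y)$. The first two are precisely the path formulas $\pi$ and $\pi^{-1}$. For the two diagonal atoms I would first observe that the universal relation $\D\times\D$ is definable in $\sfPDL$, e.g.\ as ${\le}\cup{\lefta}$ (using totality of $\le$) or as $\pic{\test{\False}}$; call this path formula $\mathsf{all}$. Then $\trad\pi(x,x)$, viewed as a binary constraint on $(\ela,\elb)$, has semantics $\sem{\Loop\pi}\times\D$ and is captured by $\test{\Loop\pi}\cdot\mathsf{all}$, while $\trad\pi(y,y)$ has semantics $\D\times\sem{\Loop\pi}$ and is captured by $\mathsf{all}\cdot\test{\Loop\pi}$. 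Translating $\lor$ to $\cup$ and $\land$ to $\cap$ (both available in full $\sfPDL$), the whole positive boolean combination becomes a single $\sfPDL$ path formula $\sigma$ with $\sem\sigma=\{(\ela,\elb)\mid\M,[x\mapsto\ela,y\mapsto\elb]\models\Phi\}$, i.e.\ $\sigma\equiv\Phi(x,y)$.

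I do not expect a genuine obstacle here, since all the content is in Theorem~\ref{thm:FO-to-PDL}. The only points requiring care are recognizing the diagonal atoms $\trad\pi(x,x)$ as $\Loop\pi$ and expressing them as path formulas via a test composed with the universal relation, and checking that the boolean reassembly stays within the intended fragment: $\sfPDLm$ for the state-formula statement (where $\cup$ and unrestricted complement are unavailable, but $\lor$ and $\lnot$ suffice) and full $\sfPDL$ for the path-formula statement (where $\cup$, $\cap$ and the universal relation are needed).
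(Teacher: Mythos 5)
Your proof is correct and takes essentially the same route as the paper's: in the one-variable case you turn each atom $\trad\pi(x,x)$ into the state formula $\Loop\pi$ and use closure of $\sfPDLm$ state formulas under boolean connectives, and in the two-variable case you replace $\trad\pi(y,x)$ by $\pi^{-1}$ and the diagonal atoms by a test composed with the universal relation, which is exactly the paper's $(\test{\Loop \pi} \cdot {\le}) \cup (\test{\Loop \pi} \cdot {\ge})$ after distributing composition over union, before merging everything with $\cup$ and $\cap$. No gaps.
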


\begin{proof}
  If $\Phi$ has a single free variable $x$, it is equivalent to a positive
  boolean combination of formulas of the form $\trad \pi(x,x)$, which are
  themselves equivalent to the formulas $\Loop{\pi}$.
  The combination of these $\Loop \pi$ formulas is then a state formula
  of ${\sfPDLm}$.

  If $\Phi$ has two free variables $x$ and $y$, we obtain an equivalent
  positive boolean combination of formulas of the form $\trad \pi(x,y)$,
  $\trad \pi(y,x)$, $\trad \pi(x,x)$, or $\trad \pi (y,y)$.
  We can replace any subformula $\trad \pi (y,x)$ with $\trad {\pi^{-1}}(x,y)$,
  and any subformula $\trad \pi (x,x)$ with
  $\trad {\pi_1} (x,y) \lor \trad {\pi_2} (x,y)$,
  where $\pi_1 = (\test{\Loop \pi} \cdot {\le})$ and
  $\pi_2 = (\test{\Loop \pi} \cdot {\ge})$,
  and similarly for formulas $\trad \pi (y,y)$.
  We obtain an equivalent positive boolean combination of formulas of the form
  $\trad \pi(x,y)$.
  Since $\sfPDL$ allows union and intersection of path formulas, this is
  equivalent to a $\sfPDL$ formula.
\end{proof}

Another consequence is that $\FO$ over linear orders with \intpr relations
has the three-variable property. More precisely:

\begin{theorem}\label{thm:3-var}
  Any $\FO$ formula is equivalent to a boolean combination of formulas in
  $\FOk {3}$.
\end{theorem}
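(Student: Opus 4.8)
The plan is to obtain Theorem~\ref{thm:3-var} as an essentially immediate corollary of the two translations already established: the forward translation of $\FO$ into star-free PDL (Theorem~\ref{thm:FO-to-PDL}) and the backward translation of star-free PDL into three-variable first-order logic (Lemma~\ref{lem:PDL-to-FO}). Since $\FOt = \FOk 3$ by definition, these compose directly for any formula having at least one free variable, and the only point needing separate attention is that Theorem~\ref{thm:FO-to-PDL} is stated only for formulas with at least one free variable, so sentences must be handled by hand.

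First I would treat a formula $\Phi$ with at least one free variable. By Theorem~\ref{thm:FO-to-PDL}, $\Phi$ is equivalent to a positive boolean combination of formulas $\trad\pi(x,y)$ with $x,y \in \Free(\Phi)$ (possibly including terms $\trad\pi(x,x)$) and $\pi \in \sfPDLm \subseteq \sfPDL$. Applying Lemma~\ref{lem:PDL-to-FO} to each $\pi$ replaces $\trad\pi(x,y)$ by an equivalent formula of $\FOt = \FOk 3$. A positive boolean combination of such formulas is in particular a boolean combination of $\FOk 3$ formulas, which is exactly the conclusion sought.

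For the sentence case, I would reduce to the one-free-variable case by attaching a dummy variable: over nonempty structures a sentence $\Phi$ is equivalent to $\exists x.\,(\Phi \land x = x)$, whose body $\Phi \land x = x$ has the single free variable $x$. The previous paragraph (equivalently, the single-free-variable case of the Corollary above) gives $\Phi \land x = x \equiv \theta(x)$ for some $\theta(x) \in \FOt$, and hence $\Phi \equiv \exists x.\,\theta(x) \in \FOk 3$. I expect no deep obstacle in any of this; the only care needed is the variable bookkeeping. In the sentence case one checks that reusing the two auxiliary variables of $\theta$ after quantifying $x$ keeps the count at three, and in the multiple-variable case one remembers that it is the whole boolean combination, not a single formula, that lies in $\FOk 3$, since distinct conjuncts $\trad\pi(x_i,x_j)$ range over distinct pairs of free variables and may collectively mention more than three variables. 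If empty structures are allowed, a one-line case distinction handles them, as a sentence takes a fixed truth value there.
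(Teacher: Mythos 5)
Your proof is correct and follows essentially the same route as the paper, which presents Theorem~\ref{thm:3-var} as an immediate consequence of composing Theorem~\ref{thm:FO-to-PDL} with Lemma~\ref{lem:PDL-to-FO}. Your explicit handling of sentences via $\exists x.\,(\Phi \land x = x)$ fills in a detail the paper leaves implicit, but it is the standard reduction and raises no new issues.
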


This also allows us to answer an open question from \cite{AHRW15}, namely,
whether structures over the real numbers with polynomial functions have
the $3$-variable property.
Suppose that $\Rel$ is a set of polynomials
$p : \mathbb{R} \to \mathbb{R}$.
Let $\Mg = (\mathbb{R},\le, (\intM \Mg p)_{p \in \Rel})$,
where $\le$ is the usual ordering of the real numbers,
and $\intM \Mg p = \{(x,p(x)) \mid x \in \mathbb{R}\}$ for all $p \in \Rel$.
Given an interpretation $h : \AP \to 2^{\mathbb{R}}$ of the monadic predicates,
we denote by $(\Mg,h)$ the structure
$(\mathbb{R},\leM, (\intM \Mg p)_{p \in \Rel},(h(\propa))_{\propa \in \AP})$.
We say that two formulas $\Phi, \Psi \in \FO$ are equivalent over $\Mg$,
written $\Phi \equivp \Psi$, if for all $h : \AP \to 2^{\mathbb{R}}$ and
$\nu : \Free(\Phi) \cup \Free(\Psi) \to \mathbb{R}$,
we have $(\Mg,h),\nu|_{\Free(\Phi)} \models \Phi$ if and only if
$(\Mg,h),\nu|_{\Free(\Psi)} \models \Psi$.

\begin{theorem}
  For all $\Phi \in \FO$, there exists a boolean combination $\Psi$
  of formulas in $\FOt$ such that $\Phi \equivp \Psi$.
\end{theorem}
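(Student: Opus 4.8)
The plan is to reduce the statement to Theorem~\ref{thm:3-var} by replacing each polynomial graph, which is \emph{not} \intpr, by finitely many \intpr relations coming from its monotone pieces. Fix $\Phi \in \FO$. Only finitely many relation symbols occur in $\Phi$, so let $\Rel_0 = \{p_1,\dots,p_k\} \subseteq \Rel$ be the polynomials appearing in it, and let $m_1 < \cdots < m_N$ be the finitely many reals that are a local extremum of some $p_i$. These split $\mathbb{R}$ into the half-open intervals $J_0 = (-\infty,m_1]$, $J_1 = (m_1,m_2]$, \dots, $J_N = (m_N,+\infty)$, and on each $J_j$ every $p_i$ is monotone, since no local extremum of $p_i$ lies in its interior. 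Hence for every $p \in \Rel_0$ and every $j$ the relation $\{(x,p(x)) \mid x \in J_j\}$ is the graph of an increasing or decreasing partial function, and is \intpr by Example~\ref{ex1}. Introduce a fresh symbol $p_j$ for it, and for $h : \AP \to 2^{\mathbb{R}}$ let $\M'(h)$ be the structure on $\mathbb{R}$ with the order $\le$, these \intpr relations $(p_j)_{p \in \Rel_0,\, 0 \le j \le N}$, and the predicates $(h(\propa))_{\propa \in \AP}$. Since the $J_j$ partition $\mathbb{R}$, we have $p(x) = y$ in $(\Mg,h)$ if and only if $(x,y) \in p_j$ for the unique $j$ with $x \in J_j$; so replacing each atom $p(x,y)$ in $\Phi$ by $\bigvee_j p_j(x,y)$ produces a formula $\Phi'$ over the signature of $\M'(h)$ with $(\Mg,h),\nu \models \Phi$ iff $\M'(h),\nu \models \Phi'$, for all $h$ and $\nu$.

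Now I apply Theorem~\ref{thm:3-var} to $\Phi'$: since all the $p_j$ are \intpr, $\Phi'$ is equivalent, over every structure with \intpr relations and in particular over every $\M'(h)$, to a boolean combination $\Psi'$ of $\FOt$ formulas over the symbols $p_j$. It remains to translate $\Psi'$ back over $\Mg$, and for this it suffices to rewrite each atom $p_j(u,v)$ as $p(u,v) \land \theta_j(u)$, where $\theta_j(u) \in \FOt$ defines the condition $u \in J_j$. Following Example~\ref{ex:polynomials}, for each $p \in \Rel_0$ the predicates ``$x$ is a local minimum (resp.\ maximum) of $p$'' are in $\FOk 3$; taking their disjunction over $\Rel_0$ makes ``$x$ is one of the $m_\ell$'' definable in $\FOk 3$, and the counting argument of Example~\ref{ex:polynomials} then makes ``at least $i$ of the $m_\ell$ are $\le x$'' definable in $\FOk 3$, so that a boolean combination of these yields $\theta_j(u) \in \FOt$. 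Performing the substitution in $\Psi'$ gives a boolean combination $\Psi$ of $\FOt$ formulas, and chaining the three equivalences yields $\Phi \equivp \Psi$.

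The only real obstacle is to check that this final substitution stays within three variables. This works because $\theta_j(u)$ has $u$ as its sole free variable and uses only two further, bound, variable names, which we rename to the two elements of $\{x,y,z\} \setminus \{u\}$; in the conjunction $p(u,v) \land \theta_j(u)$ the first conjunct is evaluated with the ambient value of $v$, while the quantifiers inside $\theta_j(u)$ merely reuse (shadow) the other two variable names within their own scope. Thus no fourth variable is ever introduced, every conjunct remains in $\FOt$, and $\Psi$ lies in the boolean closure of $\FOt$, as required.
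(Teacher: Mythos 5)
Your proof is correct and follows essentially the same route as the paper's: replace each polynomial by its \intpr monotone restrictions, translate $\Phi$ over the enriched signature, apply Theorem~\ref{thm:3-var} there, and translate back into a boolean combination of $\FOt$ formulas using the $\FOt$-definability of the cutpoint intervals from Example~\ref{ex:polynomials}. The only differences are cosmetic: you use one common partition of $\mathbb{R}$ by the extrema of all polynomials occurring in $\Phi$ (the paper partitions per polynomial, over all of $\Gamma$), and you spell out the three-variable bookkeeping of the final substitution, which the paper leaves implicit.
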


\begin{proof}
  Let $p \in \Gamma$, and $m_1 < \cdots < m_n$ its local extrema.
  We denote by $p_{(-\infty,m_1)}$, $p_{[m_1,m_2)}, \ldots, p_{[m_n,+\infty)}$ the
  (monotone) restrictions of $p$ to intervals delimitated by its local extrema,
  and $\Delta_p$ the set of these partial functions.
  Let $\Delta = \bigcup_{p \in \Gamma} \Delta_p$.
  As above, we let
  $\Md = (\mathbb{R},\le, (\intM \Md {p_I})_{p_I \in \Delta})$,
  where $\le$ is the usual ordering of the real numbers, and
  $\intM \Md {p_I}= \{(x,p(x)) \mid x \in I\}$.
  Note that $\intM \Md {p_I}$ is \intpr (cf. Example~\ref{ex1}).
  We say that two formulas $\Phi \in \FO$ and $\Psi \in \FOD$ are equivalent,
  written $\Phi \equiv \Psi$, when for all $h : \AP \to 2^{\mathbb{R}}$ and
  $\nu : \Free(\Phi) \cup \Free(\Psi) \to \mathbb{R}$,
  we have $(\Mg,h),\nu|_{\Free(\Phi)} \models \Phi$ if and only if
  $(\Md,h),\nu|_{\Free(\Psi)} \models \Psi$.

  Let $\Phi \in \FO$.
  The formula $\Psi \in \FOD$ obtained by replacing each
  atomic formula $p(x,y)$ by $\bigvee_{p_I \in \Delta_p} p_I(x,y)$
  is equivalent to $\Phi$.
  Applying Theorem~\ref{thm:3-var} to $\Psi$, we obtain another
  formula $\Psi' \in \FOD$ such that $\Psi' \equiv \Psi$
  and $\Psi'$ is a boolean combination of formulas in $\FODt$.

  Following Example~\ref{ex:polynomials}, one can construct for
  each $p_I \in \Delta$ a formula ``$x \in I$'' of $\FOt$ such that
  $(\Mg,h), \nu \models x \in I$ if and only if $\nu(x) \in I$.
  Consider now the formula $\Phi' \in \FO$ obtained by replacing each
  atomic formula $p_I(x,y)$ in $\Psi'$ by $x \in I \land p(x,y)$.
  Then $\Phi' \equiv \Psi'$, hence $\Phi \equivp \Phi'$.
  Moreover, $\Phi'$ is a boolean combination of formulas in $\FOt$.
\end{proof}

The remainder of the section is devoted to the proof of
Theorem~\ref{thm:FO-to-PDL}.

\subparagraph{Eliminating negations.}
The fact that all $\sfPDLm$ path formulas are \intpr gives us a simple
characterization of the complement of a path formula: we show below that
an event $\elb$ is in $\sem {\pic \pi} (\ela)$ if it is to the left or to
the right of all elements of $\sem \pi (\ela)$, or if it does not satisfy
$\existsptrue{\pi^{-1}}$.
We can then show that the complement of a path formula in $\sfPDLm$ is
equivalent to a union of path formulas in $\sfPDLm$.
This will allow us to deal with negation in the translation from $\FO$
to $\sfPDLm$.

\begin{lemma}\label{lem:complement}
  For all path formulas $\pi \in \sfPDLm$, $\pic \pi$ is equivalent to
  a union of $\sfPDLm$ formulas.
\end{lemma}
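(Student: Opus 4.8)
The plan is to carry out, for a fixed $\M$ and a pair $(\ela,\elb)$, a case analysis of the reasons why $(\ela,\elb)\in\sem{\pic\pi}$, i.e.\ why $\elb\notin\sem\pi(\ela)$. The crucial input is that $\pi$ is \intpr: since $\sfPDLm$ is a fragment of $\sfPDLmi$, Lemma~\ref{lem:intpr} applies. Writing $L=\sem\pi(\ela)$ and $L'=\sem{\pi^{-1}}(\elb)$ and recalling that $\elb\notin L$ iff $\ela\notin L'$, I would distinguish three exhaustive cases: (i)~$L=\emptyset$; (ii)~$L'=\emptyset$; and (iii)~$L\neq\emptyset$ and $L'\neq\emptyset$. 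As already observed in the discussion preceding Lemma~\ref{lem:intpr}, in case~(iii) the set $L=\sem\pi(\{\ela\})$ is an interval of $(\sem\pi(\D),\le)$ and $\elb\in\sem\pi(\D)$ (because $L'\neq\emptyset$); hence $\elb\notin L$ forces $\elb<L$ or $\elb>L$. Symmetrically, $\ela\in\sem{\pi^{-1}}(\D)$ and $\ela<L'$ or $\ela>L'$. Thus exactly one of the four combinations holds, and these are precisely the defining conditions of $\cll\pi$, $\clr\pi$, $\crl\pi$ and $\crr\pi$.

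I would then show that $\pic\pi$ is equivalent to the union of the following eight $\sfPDLm$ path formulas:
\begin{align*}
  & \cll\pi, \quad \clr\pi, \quad \crl\pi, \quad \crr\pi, \\
  & \test{\lnot\existsptrue\pi}\cdot{\le}, \quad
    \test{\lnot\existsptrue\pi}\cdot{\ge}, \quad
    {\le}\cdot\test{\lnot\existsptrue{\pi^{-1}}}, \quad
    {\ge}\cdot\test{\lnot\existsptrue{\pi^{-1}}} \, .
\end{align*}
The four $\mathsf{c}$-formulas handle case~(iii) by the analysis above. The two formulas with $\test{\lnot\existsptrue\pi}$ handle case~(i): each relates an $\ela$ with $\sem\pi(\ela)=\emptyset$ to all larger, respectively all smaller, elements, and together they reach every $\elb$ since $\elb$ is always comparable to $\ela$. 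The last two formulas handle case~(ii) dually, using that $\elb\notin\sem\pi(\D)$ implies $\elb\notin\sem\pi(\ela)$ for every $\ela$. The splitting of each emptiness case into a ``$\le$'' and a ``$\ge$'' disjunct is forced by the fact that $\sfPDLm$ has no union on path formulas, so the universal relation out of a tested point is only available as the union of its upward and downward closures; this is why the statement produces a union rather than a single formula.

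For correctness I would verify both inclusions. Soundness of each disjunct is immediate: the $\mathsf{c}$-formulas place $\elb$ strictly on one side of the nonempty set $L$; the $\test{\lnot\existsptrue\pi}$ formulas guarantee $L=\emptyset$; and the $\test{\lnot\existsptrue{\pi^{-1}}}$ formulas guarantee $\elb\notin\sem\pi(\D)$, hence $\elb\notin L$. Completeness is exactly the case distinction of the first paragraph. I would also check that every disjunct is syntactically legal in $\sfPDLm$: $\cll,\clr,\crl,\crr$ are primitives of the fragment; ${\le}$, ${\ge}={\le}^{-1}$, composition and tests are allowed; and $\lnot\existsptrue\pi=\lnot\existsp\pi\True$ and $\lnot\existsptrue{\pi^{-1}}$ are legal state formulas (the latter using that $\pi^{-1}\in\sfPDLm$). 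The main obstacle is the interval argument of case~(iii): one must invoke \intpr-ness twice---once for $\pi$, to locate $\elb$ relative to $L$, and once for $\pi^{-1}$, to locate $\ela$ relative to $L'$---and verify that the resulting pair of side conditions matches the definition of one of the four $\mathsf{c}$-formulas. This is precisely the reason why one cannot simply take $\leftpi\pi\cup\rightpi\pi$ together with the emptiness cases: those formulas are not in $\sfPDLm$ (they use complement), and only their \intpr refinements $\cll\pi,\clr\pi,\crl\pi,\crr\pi$, which additionally constrain $\ela$ relative to $L'$, belong to the fragment.
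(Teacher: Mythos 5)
Your proof is correct and follows essentially the same route as the paper: you exhibit the identical union of eight $\sfPDLm$ formulas (the four emptiness disjuncts built from $\test{\lnot\existsptrue{\pi}}$ and $\test{\lnot\existsptrue{\pi^{-1}}}$, plus $\cll \pi$, $\clr \pi$, $\crl \pi$, $\crr \pi$) and justify it by the same case analysis, invoking Lemma~\ref{lem:intpr} to force $\elb$ and $\ela$ strictly to one side of $\sem{\pi}(\ela)$ and $\sem{\pi^{-1}}(\elb)$ when both are nonempty. The only difference is presentational: the paper argues the nonempty case by contraposition (if none of the four $\mathsf{c}$-formulas holds, interval-preservation yields $(\ela,\elb) \in \sem{\pi}$), while you argue it directly, which is the same argument.
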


\begin{proof}
  We show that
  \begin{align*}
    \pic \pi \equiv {}
    & (\test{\lnot \existsptrue{\pi}} \cdot {\righta})
    \cup
    (\test{\lnot \existsptrue{\pi}} \cdot {\lefta})
    \cup {} \\
    & ({\righta} \cdot \test{\lnot \existsptrue{\pi^{-1}}})
    \cup
    ({\lefta} \cdot \test{\lnot \existsptrue{\pi^{-1}}})
    \cup {} \\
    & (\cll \pi) \cup (\clr \pi) \cup (\crl \pi) \cup (\crr \pi)\, .
  \end{align*}
  We denote by $\pi'$ the right-hand-side formula.
  First, for all $\ela,\elb$ such that $\sem {\pi} (\ela) = \emptyset$
  or $\sem {\pi^{-1}} (\elb) = \emptyset$, we have
  $(\ela,\elb) \in \sem {\pic \pi}$ and $(\ela,\elb) \in \sem {\pi'}$.
  Now, suppose that $\sem {\pi} (\ela) \neq \emptyset$ and
  $\sem {\pi^{-1}} (\elb) \neq \emptyset$.
  We have $(\ela,\elb) \in \sem {\pi'}$ if and only if
  $(\ela,\elb) \in \sem {\cll \pi \cup \clr \pi \cup \crl \pi \cup \crr \pi}$.
  Clearly, if
  $(\ela,\elb) \in \sem {\cll \pi \cup \clr \pi \cup \crl \pi \cup \crr \pi}$,
  then $(\ela,\elb) \in \sem {\pic \pi}$.
  Conversely, let us show that if
  $(\ela,\elb) \notin \sem {\cll \pi \cup \clr \pi \cup \crl \pi \cup \crr \pi}$
  then $(\ela,\elb) \in \sem {\pi}$.
  In that case, we have either
  $\ela_1 \le \ela \le \ela_2$ for some
  $\ela_1,\ela_2 \in \sem {\pi^{-1}}(\elb)$, or
  $\elb_1 \le \elb \le \elb_2$ for some $\elb_1,\elb_2 \in \sem {\pi}(\ela)$.
  Since $\pi$ is \intpr, we obtain $(\ela,\elb) \in \sem \pi$.
\end{proof}

\subparagraph{Existential quantification.}
The elimination of existential quantifiers relies on the simple lemma below:

\begin{lemma}\label{lem:intersection-intervals}
  Let $(A,\le)$ be a linearly ordered set, and $I_1,\ldots,I_n$ intervals
  of $(A,\le)$. Then $\bigcap_{1 \le i \le n} I_i \neq \emptyset$ if and
  only if for all $1 \le i,j \le n$, $I_i \cap I_j \neq \emptyset$.
\end{lemma}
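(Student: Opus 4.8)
The statement is the classic Helly property for intervals on a line: a family of intervals has a common point iff they pairwise intersect.

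The plan is to prove the nontrivial direction (pairwise intersection implies global intersection) by induction on $n$, using the linear order to pick out a single witness. The base cases $n \le 2$ are immediate. For the inductive step, I would reduce the problem to combining a new interval with the intersection of the previous ones, but since the intersection of intervals is itself an interval, the core content is really the case $n = 3$ phrased concretely: given three intervals that pairwise meet, produce a common point. So first I would record the easy observation that a finite intersection of intervals is again an interval (directly from the definition of interval in the excerpt: if $\ela \le \elb \le \elc$ with $\ela,\elc$ in each $I_i$, then $\elb$ is in each $I_i$, hence in the intersection).

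The key step is a witness-selection argument. For each $i$, and each pair $i,j$, pick a point $c_{ij} \in I_i \cap I_j$. The natural candidate for a common point is the \emph{maximum} of the left endpoints, made rigorous order-theoretically since we may not have genuine endpoints: I would consider the finite set of chosen witnesses $\{c_{ij} \mid 1 \le i < j \le n\}$ and let $c$ be the largest among those $c_{ij}$ that can serve as a lower witness, or more cleanly, argue as follows. Take $c = \max_{i} a_i$ where $a_i = c_{i i'}$ is chosen to be a point of $I_i$; the finiteness of the index set guarantees this maximum exists in the linear order without any completeness assumption. The claim is that this $c$ lies in every $I_j$. To see $c \in I_j$, write $c = a_k$ for some $k$; then $c \in I_k$, and I must show $c \in I_j$. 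Here I use that $I_j$ and $I_k$ intersect: pick $d \in I_j \cap I_k$. If $d \le c$, then $d \le c \le$ (some point of $I_j$ on the right), and since $d \in I_j$ and a point of $I_j$ lies to the right of $c$ (because $c \in I_k$ is to the left of a witness in $I_j \cap I_k$... ), the convexity of $I_j$ places $c \in I_j$; the symmetric subcase handles $d \ge c$.

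The subtlety I expect to be the main obstacle is making the endpoint-free argument precise over an \emph{arbitrary} linear order, where intervals need not be of the form $[a,b]$ and maxima/minima of the underlying set need not exist. The clean way around this is to avoid endpoints entirely and reason purely with chosen witness points and the convexity (interval) property, as sketched above: every inequality I invoke is between concrete chosen elements $c_{ij}$, so existence of suprema is never needed, only the existence of a maximum of a \emph{finite} set, which holds in any linear order. Once $n=3$ is established in this witness form, the induction is routine: replacing $I_1,\dots,I_{n-1}$ by their intersection $I'$ (an interval), I must check $I' \cap I_n \neq \emptyset$, which follows by applying the $n=3$ case to $I_i, I_j, I_n$ for the indices realizing the pairwise intersections with $I_n$, together with the inductive hypothesis that $I' = \bigcap_{i<n} I_i \neq \emptyset$.
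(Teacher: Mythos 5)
Your overall idea---simulate ``the maximum of the left endpoints'' by finitely many chosen witness points, so that no endpoints or completeness are ever needed---is sound, and it is the same idea the paper implements. But your execution has a genuine gap, exactly at the spot you flagged with ``\ldots''. You set $c = \max_i a_i$ where $a_i = c_{ii'}$ is \emph{some} pairwise witness lying in $I_i$, with $i'$ left unspecified; with an arbitrary choice of witnesses the claim that $c$ lies in every $I_j$ is simply false. Take $A = \mathbb{R}$, $I_1 = I_2 = [0,10]$, $I_3 = [0,1]$, with witnesses $c_{12} = 10$, $c_{13} = c_{23} = 0$; choosing $a_1 = a_2 = c_{12} = 10$ and $a_3 = 0$ gives $c = 10 \notin I_3$. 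The verification breaks in your case $d \le c$: there you need a point of $I_j$ lying at or above $c$, and nothing in your setup provides one. The fix is to make the selection canonical: put $a_i = \min_{j \neq i} c_{ij}$ (a minimum of a finite set, so it exists and belongs to $I_i$), and $c = \max_i a_i = a_k$. Then for any $j$ you have $a_j \le c$ with $a_j \in I_j$, and $c = a_k \le c_{kj}$ with $c_{kj} \in I_j$, so convexity of $I_j$ yields $c \in I_j$. Note that this argument works uniformly for all $n$, so your induction becomes unnecessary---which is fortunate, because the inductive step as you phrased it also does not follow: knowing $I' = \bigcap_{i<n} I_i \neq \emptyset$ and that every triple $I_i \cap I_j \cap I_n$ is nonempty does not directly produce a point of $I' \cap I_n$; the correct routine induction applies the hypothesis to the $n-1$ intervals $I_i \cap I_n$, whose pairwise intersections are nonempty by the $n=3$ case.

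For comparison, the paper's proof realizes the same endpoint-free idea without choosing any witness points: it defines two total preorders on the intervals themselves, $I \leftle J$ iff every point of $J$ has a point of $I$ at or below it, and $I \rightle J$ iff every point of $I$ has a point of $J$ at or above it. Finiteness of the family gives an interval $I_k$ that is $\leftle$-greatest and an interval $I_\ell$ that is $\rightle$-least, and convexity shows $\bigcap_{1 \le i \le n} I_i = I_k \cap I_\ell$. This is slightly stronger than the lemma (the whole intersection \emph{equals} a single pairwise intersection), needs no induction and no case analysis, and is the cleaner way to package what your witness argument, once repaired, is doing.
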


\begin{proof}
  We show that there exists $k$ and $\ell$ such that $\bigcap_{1 \le i \le n} I_i
  = I_k \cap I_\ell$, which implies the result.
  We define relations $\leftle$ and $\rightle$ over $\{I_1,\ldots,I_n\}$
  which, intuitively, compare respectively the left and right bounds of the
  intervals:
  \begin{align*}
    I \leftle J
    & \qquad\text{if}\qquad
      \forall a \in J, \exists b \in I, b \le a \\
    I \rightle J
    & \qquad\text{if}\qquad
      \forall a \in I, \exists b \in J, a \le b \, .
  \end{align*}
  It is easy to check that $\leftle$ and $\rightle$ are transitive, an that
  for all $I$ and $J$, we have $I \leftle J$ or $J \leftle I$ (or both),
  and similarly for $\rightle$.
  Thus, there exists $k$ such that $I_i \leftle I_k$ for all $i$, and
  $\ell$ such that $I_\ell \rightle I_i$ for all $i$.
  Then for all $a \in I_k \cap I_\ell$, for all $i$, there exists
  $b,b' \in I_i$ such that $b \le a \le b'$.
  Since $I_i$ is an interval, we obtain $a \in I_i$.
  Hence $I_k \cap I_\ell = \bigcap_{1 \le i \le n} I_i$.
\end{proof}

The next lemma follows from an application of
Lemma~\ref{lem:intersection-intervals} to intervals of the form
$\sem {\pi_i} (a_i)$.

\begin{lemma}\label{lem:exists}
  Let $n \ge 1$. For all path formulas $\pi_1,\ldots,\pi_n$ and all state
  formulas $\varphi$ in $\sfPDLm$, the $\FO$ formula
  \[
    \Phi = \exists x. \left(\trad \varphi(x) \land
    \bigwedge_{1 \le i \le n} \trad {\pi_i}(x_i,x)\right)
    \qquad (x_i \neq x \text{ for all $i$})
  \]
  is equivalent to a positive boolean combination of formulas of the form
  $\trad \pi(x_j,x_k)$, with $1 \le j,k \le n$ and $\pi \in \sfPDLm$.
\end{lemma}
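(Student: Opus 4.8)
The plan is to unfold the semantics of $\Phi$ and reduce global nonemptiness of an intersection of intervals to pairwise nonemptiness via Lemma~\ref{lem:intersection-intervals}. Fix a model $\M$ together with elements $\ela_1,\dots,\ela_n \in \D$ interpreting $x_1,\dots,x_n$, and write $I_i = \sem{\pi_i}(\ela_i)$. Unwinding the translations $\trad{\varphi}$ and $\trad{\pi_i}$ (Lemma~\ref{lem:PDL-to-FO}), the formula $\Phi$ holds under this interpretation if and only if $\sem{\varphi} \cap \bigcap_{1 \le i \le n} I_i \neq \emptyset$. Since every $\pi_i \in \sfPDLm$ is interval-preserving (Lemma~\ref{lem:intpr}) and a singleton is an interval, each $I_i$ is an interval of $(\sem{\pi_i}(\D),\le)$.

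The main obstacle is that the $I_i$ are intervals of \emph{different} suborders $\sem{\pi_i}(\D)$ rather than of $(\D,\le)$, so Lemma~\ref{lem:intersection-intervals} cannot be applied to them as sets in $\D$: intervals that skip gaps can meet pairwise yet have empty common intersection, so the naive reduction to pairwise nonemptiness $I_j \cap I_k \neq \emptyset$ inside $\D$ is false. I would fix this by passing to the common suborder $S = \sem{\varphi} \cap \bigcap_{1 \le i \le n} \sem{\pi_i}(\D)$. Because $S \subseteq \sem{\pi_i}(\D)$, interval-preservation gives that each $I_i \cap S$ is an interval of $(S,\le)$; and since $\bigcap_i I_i \subseteq \bigcap_i \sem{\pi_i}(\D)$, a short set computation yields $\sem{\varphi} \cap \bigcap_i I_i = \bigcap_i (I_i \cap S)$. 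Applying Lemma~\ref{lem:intersection-intervals} to the intervals $I_i \cap S$ of $(S,\le)$, we get that $\Phi$ holds if and only if $(I_j \cap S) \cap (I_k \cap S) \neq \emptyset$ for all $1 \le j,k \le n$.

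It then remains to express each pairwise condition as a single $\sfPDLm$ path formula relating $x_j$ and $x_k$. Membership in $S$ is a state property: $z \in \sem{\pi_i}(\D)$ if and only if $z \in \sem{\existsptrue{\pi_i^{-1}}}$, so $S = \sem{\psi}$ where $\psi = \varphi \land \bigwedge_{1 \le i \le n} \existsptrue{\pi_i^{-1}}$ is a state formula of $\sfPDLm$. Then $(I_j \cap S) \cap (I_k \cap S) \neq \emptyset$ asserts exactly that some $z$ satisfies $(\ela_j,z) \in \sem{\pi_j}$, $z \in \sem{\psi}$ and $(z,\ela_k) \in \sem{\pi_k^{-1}}$, i.e.\ $(\ela_j,\ela_k) \in \sem{\sigma_{j,k}}$ with $\sigma_{j,k} = \pi_j \cdot \test{\psi} \cdot \pi_k^{-1} \in \sfPDLm$. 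Hence
\[
  \Phi \equiv \bigwedge_{1 \le j,k \le n} \trad{\sigma_{j,k}}(x_j,x_k),
  \qquad \sigma_{j,k} = \pi_j \cdot \test{\psi} \cdot \pi_k^{-1},
\]
a positive boolean combination of the required form (the diagonal terms $j=k$ producing the allowed subformulas $\trad{\sigma_{j,j}}(x_j,x_j)$). The only points needing care are the set identity $\sem{\varphi} \cap \bigcap_i I_i = \bigcap_i(I_i \cap S)$ and the fact that $I_i \cap S$ is an interval of $(S,\le)$, both immediate from $I_i \subseteq \sem{\pi_i}(\D)$ and the interval-preservation of $\pi_i$.
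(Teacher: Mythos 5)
Your proof is correct and takes essentially the same route as the paper's: the paper also sets $\psi = \varphi \land \bigwedge_{1 \le i \le n} \existsptrue{\pi_i^{-1}}$, works with the intervals $\sem{\pi_i}(\nu(x_i)) \cap \sem{\psi}$ of the linearly ordered set $(\sem{\psi},\le)$ (these are exactly your $I_i \cap S$, since $S = \sem{\psi}$), and applies Lemma~\ref{lem:intersection-intervals} to reach the same equivalent formula $\bigwedge_{1 \le j,k \le n} \trad{(\pi_j \cdot \test{\psi} \cdot \pi_k^{-1})}(x_j,x_k)$. Your explicit discussion of why one must pass to a common suborder before invoking Lemma~\ref{lem:intersection-intervals} is a point the paper handles implicitly by building $\sem{\psi}$ into the definition of its $I_i$ from the start.
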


\begin{proof}
  Let $\psi = \varphi \land \bigwedge_{1 \le i \le n} \existsptrue{\pi_i^{-1}}$,
  and
  \[
    \Psi = \bigwedge_{1 \le i,j \le n} \trad
    {(\pi_i \cdot \test{\psi} \cdot \pi_j^{-1})} (x_i,x_j) \, .
  \]
  Notice that $\Free(\Psi) = \Free(\Phi) = \{x_1,\ldots,x_n\}$.

  Let us show that $\Phi \equiv \Psi$.
  Let $\M = (\D,\le,(\intM M \rela)_{\rela \in \Rel},
  (\intM M \propa)_{\propa \in \AP})$,
  and $\nu : \{x_1,\ldots,x_n\} \to \D$.
  For all $1 \le i \le n$, let $I_i = \sem {\pi_i} (\nu(x_i)) \cap \sem \psi$.
  Let us show that $I_i$ is an interval of $(\sem \psi,\le)$.
  First, since $\pi_i$ is \intpr, $\sem {\pi_i} (\nu(x_i))$ is an interval
  of $(\sem {\existsptrue{\pi_i^{-1}}},\le)$.
  Thus, $I_i$ is an interval of
  $(\sem {\existsptrue{\pi_i^{-1}}} \cap \sem \psi,\le)$.
  But since $\sem {\existsptrue{\pi_i^{-1}}} \subseteq \sem \psi$,
  this is simply $(\sem \psi,\le)$.
  Besides, it is easy to verify that
  \[
    \M,\nu \models \Phi \qquad\iff\qquad
    \bigcap_{1 \le i \le n} I_i \neq \emptyset \, .
  \]
  Applying Lemma~\ref{lem:intersection-intervals}, we obtain
  \begin{align*}
    \M,\nu \models \Phi
    & \qquad\iff\qquad
    \text{for all $1 \le i,j \le n$},
      I_i \cap I_j
    \neq \emptyset \\
    & \qquad\iff\qquad
      \text{for all $1 \le i,j \le n$},
      (\nu(x_i),\nu(x_j)) \in \sem
      {\pi_i \cdot \test{\psi} \cdot {\pi_j^{-1}}} \\
    & \qquad\iff\qquad
      \M,\nu \models \Psi \, . \qedhere
  \end{align*}

\end{proof}

\subparagraph{Translation from $\boldsymbol{\FO}$ to $\boldsymbol{\sfPDLm}$.}
We are now ready to give the proof of Theorem~\ref{thm:FO-to-PDL}.

\begin{proof}[Proof of Theorem~\ref{thm:FO-to-PDL}]
  We assume that $\Phi$ is in prenex normal form, and prove the result
  by induction.
  The translation of atomic formulas $x \le y$ or $\rela(x,y)$ is
  straightforward; moreover, $\propa(x) \equiv \trad{\test{\propa}}(x,x)$, and
  $(x = y) \equiv \trad{\test{true}}(x,y)$.
  Using Lemma~\ref{lem:complement} to eliminate negations, we obtain the
  result for all quantifier-free formulas.

  The case $\Phi = \forall x. \Psi \equiv \lnot \exists x. \lnot \Psi$
  reduces to the case of existential quantification, applying again
  Lemma~\ref{lem:complement} to eliminate negations.

  We are left with the case $\Phi = \exists x. \Psi$. If $x$ is not free
  in $\Psi$, then $\Phi \equiv \Psi$ (since $\Psi$ has at least one free
  variable) and we are done by induction.
  Otherwise, assume that $\Free(\Psi)=\{x_1,\ldots,x_n\}$ with $n>1$ 
  and $x=x_n$.
  By induction, $\Psi$ is equivalent to a positive boolean combination 
  of formulas of the form $\trad \pi(x_i,x_j)$ with $\pi \in \sfPDLm$.
  We replace $\trad \pi(x_i,x_j)$ with $\trad{\pi^{-1}}(x_j,x_i)$ whenever $j < i$,
  and bring the resulting formula into disjunctive normal form.  
  Each conjunct is then of the form
  $\Upsilon = \Upsilon_1 \land \Upsilon_2 \land \Upsilon_3$, where
  $\Upsilon_1$ uses only variables from $\{x_1,\ldots,x_{n-1}\}$,
  $\Upsilon_2 = \bigwedge_{i} \trad{\pi_i}(y_i,x)$ with $y_i = x_j$ for some
  $1 \le j < n$,
  and $\Upsilon_3 = \bigwedge_{j} \trad{\pi_j}(x,x)$.
  Note that $\Upsilon_3  \equiv \trad{\varphi}(x)$, where
  $\varphi = \bigwedge_{j} \Loop {\pi_j}$.
  Then $\exists x. \Psi$ is equivalent to a finite disjunction of formulas
  \[
  \exists x. \Upsilon \; \equiv \;
  \Upsilon_1 \land \exists x. \left(\Upsilon_2 \land \trad\varphi(x)\right)
  \]
  with $\Upsilon_1$ and $\Upsilon_2$ as above.
  If $\Upsilon_2$ is empty, then we replace $\exists x. \trad\varphi(x)$ with
  the formula
  \[
  ({\righta}\cdot\test{\varphi}\cdot{\lefta})(x_1,x_1) \lor
  ({\lefta}\cdot\test{\varphi}\cdot{\righta})(x_1,x_1) \, .
  \]
  Otherwise, we apply Lemma~\ref{lem:exists} to
  $\exists x. \left(\Upsilon_2 \land \trad \varphi(x)\right)$.
  In all cases, we obtain an equivalent formula which is a positive boolean
  combination of formulas $\trad \pi(x_i,x_j)$ with $1 \le i,j < n$ and
  $\pi \in \sfPDLm$.
\end{proof}

\begin{remark}
  Without the assumption that all atomic binary relations are \intpr,
  $\sfPDL$ is still equivalent to $\FOt$.
  Indeed, in the proof of Theorem~\ref{thm:FO-to-PDL}, the assumption that all
  atomic binary relations are \intpr is only used in Lemmas~\ref{lem:complement}
  and~\ref{lem:exists}.
  But this assumption is not needed in the proof of Lemma~\ref{lem:exists}
  if $\Phi$ uses only three variables $x$, $y$ and~$z$. Indeed,
  we then have
  $\Phi \equiv \trad{(\pi \cdot \test{\varphi} \cdot \pi'^{-1})}(y,z)$,
  where $\pi$ is the
  intersection of all $\pi_i$ such that $x_i = y$, and $\pi'$ is the
  intersection of all $\pi_i$ such that $x_i = z$.
  Moreover, Lemma~\ref{lem:complement} is no longer needed if we translate
  an $\FOt$ formula into a positive boolean combination of $\sfPDL$ formulas,
  since $\sfPDL$ allows to take the complement of a path formula.
  Note that the equivalence with $\FOt$ is already proven in
  \cite{TarskiGivant87} (for the calculus of relations).
\end{remark}

\section{Conclusion}\label{sec:conclusion}

We proved that every $\FO$ formula over linear orders with \intpr binary
relations can be translated into an equivalent positive boolean combination of
path formulas in $\sfPDLm$.
In particular, any $\FO$ formula is equivalent to a boolean combination of
formulas in $\FOt$, which shows that $\FO$ has the three-variable property.
This generalizes several known results.

It would be interesting to see if the equivalence between $\FO$ and $\sfPDL$
can be used as an intermediate step to prove that a temporal logic is
expressively complete.
It is not the case in general, since \cite{HirshfeldR07} provides
an example of a class of structures which fits our assumptions but does not
admit any expressively complete temporal logic. However, the equivalence
could still be useful in more restricted settings.



\bibliography{lit.bib}

\end{document}